\documentclass[11pt]{article}
\usepackage{amssymb,amsmath,amsthm}
\setlength{\textwidth}{6.5in} \setlength{\textheight}{8.75in}
\setlength{\evensidemargin}{0in} \setlength{\oddsidemargin}{0in}
\setlength{\footskip}{.5in} \setlength{\topmargin}{0in}
\setlength{\parskip}{2pt}

\newtheorem{theorem}{Theorem}

\newtheorem{proposition}{Proposition}
\newtheorem{lemma}{Lemma}

\newtheorem{cor}{Corollary}

\newtheorem{definition}{Definition}

\newcommand{\prnibble}{{\tt PageRankNibble}}
\newcommand{\prpartition}{{\tt PageRankPartition}}
\newcommand{\justnibble}{{\tt Nibble}}
\newcommand{\nibble}{{\tt RandomNibble}}
\newcommand{\partition}{{\tt Partition}}
\newcommand{\evocut}{{\tt EvoCut}}
\newcommand{\evonibble}{{\tt EvoNibble}}
\newcommand{\evopartition}{{\tt EvoPartition}}
\newcommand{\fastsample}{{\tt GenerateSample}}

\newcommand{\sumphi}[1]{Q_{#1}}
\newcommand{\good}[2]{{#1_{#2}}}

\newcommand{\polylog}[1]{ {(#1) \cdot O(\operatorname{polylog}(n))}}
\newcommand{\polylognoparen}[1]{ {O(#1 \operatorname{polylog}(n))}}

\def \timecost{ \operatorname{cost}}

\def \targetset{A}
\def \Jmax{J_{max}}

\newcommand{\comp}[1]{{ {#1}^{c} }}

\newcommand{\setdiff}[2]{{S_{#1} \Delta S_{#2}}}

\newcommand{\ind}{\mathbf{1}}

\def\P {{ \bf P}}
\def\Pwalk {{\mathcal{P}}}
\def\E {{ \bf {E}}}

\def\newS { S_{1}}

\def\vol {\mu}
\newcommand{\setvol}[1]{ { \mu(S_{#1}) } }

\newcommand{\C}[1]{{\timecost(\ess{#1})}}

\newcommand{\R}[1]{{R_{#1}}}

\def \half{{\textstyle{1\over2}}}

\def\K {{ \bf K}}
\def\Kh {{ \bf {\widehat K}}}

\def\P     {{\bf P}}
\def\Ph    {{\bf \widehat P}}
\def\Eh    {{\bf \widehat E}}
\def\Pcoup {{\bf P}^{*}}
\def\Kcoup {{\bf K}^{*}}

\def\e {{ \bf {E}}}

\newcommand{\ess}[1]{{ S_{0}, \ldots, S_{#1} }}

\newcommand{\exit}[3]{\operatorname{esc}(#1,#2,#3)}

\newcommand{\ignore}[1]{}

\begin{document} 

\begin{titlepage}
\title{Finding Sparse Cuts Locally Using Evolving Sets}
\author{Reid Andersen and Yuval Peres}

\maketitle 

\begin{abstract}

A {\em local graph partitioning algorithm} finds a set of vertices with small
conductance (i.e. a sparse cut) by adaptively exploring part of a large graph $G$,
starting from a specified vertex.  For the algorithm to be local, its complexity must be bounded
in terms of the size of the set that it outputs, with at most a weak dependence on the number $n$ of vertices in $G$.
Previous local partitioning algorithms find sparse cuts using random walks and personalized PageRank.
In this paper, we introduce a randomized local partitioning
algorithm that finds a sparse cut by simulating the {\em volume-biased evolving
set process}, which is a Markov chain on sets of vertices.  We
prove that for any set of vertices $A$ that has conductance at most $\phi$, for at least half 
of the starting vertices in $A$ our algorithm will output (with probability at least half), 
a set of conductance $O(\phi^{1/2} \log^{1/2} n)$. 
We prove that for a given run of the algorithm, 
the expected ratio between its computational complexity 
and the volume of the set that it outputs is $\polylognoparen{\phi^{-1/2}}$.  
In comparison, the best previous local
partitioning algorithm, due to Andersen, Chung, and Lang, has the same approximation guarantee, but a larger ratio
of $\polylognoparen{\phi^{-1}}$ between the complexity and output volume.  Using our
local partitioning algorithm as a subroutine, we construct a fast algorithm
for finding balanced cuts.  Given a fixed value of $\phi$, the resulting
algorithm has complexity $\polylog{m+n\phi^{-1/2})}$ and returns a cut with
conductance $O(\phi^{1/2} \log^{1/2} n)$ and volume at least $v_{\phi}/2$, where $v_{\phi}$ is
the largest volume of any set with conductance at most $\phi$.
\end{abstract}

\end{titlepage}

\section{Introduction}

A {\em local graph partitioning algorithm} solves a targeted version of the classic sparsest cut problem; 
it finds a set with small conductance by adaptively examining a small subset of the input graph near a specified starting vertex. 
Such algorithms are useful for finding target clusters in large graphs, and for quickly finding collections of small clusters.
They have been applied in practice to probe the community structure of social and information networks~\cite{Flake:2000, AndersenLang:2006, Leskovec:2008},
and have been used as subroutines to design fast algorithms for other partitioning problems~\cite{Spielman:2004,Spielman:2008}.  

Spielman and Teng introduced a local
partitioning algorithm with a remarkable approximation guarantee and bound on its computational complexity~\cite{Spielman:2004,Spielman:2008}. 
Their algorithm has a bounded {\em work/volume ratio}, which is the
ratio between the work performed by the algorithm on a given run (meaning the number of operations or computational complexity),
and the volume of the set it outputs.  It also has a {\em local approximation guarantee}, which states
(roughly) that if the starting vertex is contained in a set with conductance at
most $\phi$, then the algorithm will output a set with conductance at most
$f(\phi)$.  To find such a set, their algorithm computes a sequence of vectors that
approximate the sequence of probability distributions of a random walk from the
starting vertex.  The support of these vectors is kept small by removing tiny
amounts of probability mass at each step.  The most recent version of their
algorithm~\cite{Spielman:2008} has local approximation guarantee $f(\phi) =
O(\phi^{1/2} \log^{3/2} n)$ and work/volume ratio $\polylognoparen{\phi^{-2}}$.  Andersen,
Chung, and Lang~\cite{Andersen:2006} introduced a local partitioning algorithm
that computes a single personalized PageRank vector rather than a sequence of
random walk distributions.  Their algorithm has approximation guarantee
$O(\phi^{1/2} \log^{1/2} n)$ and work/volume ratio $\polylognoparen{\phi^{-1}}$.  

The {\em evolving set process} (ESP) is a Markov chain whose states are subsets of the vertex set of a graph. Its transition rule is a simple procedure that grows or shrinks the current set.  Morris and Peres used the ESP, and the closely related {\em volume-biased evolving set process} (volume-biased ESP), to bound the mixing time of Markov chains in terms of their isoperimetric properties~\cite{Morris:2003}.  The volume-biased ESP is equivalent to the {\em strong stationary dual} of a random walk, which was introduced earlier by Diaconis and Fill~\cite{Diaconis:1990}.  Further applications of evolving sets were described in~\cite{Montenegro, Montenegro2}.  In all of these results, evolving sets were used as analytical tools rather than algorithms.

In this paper, we design a local partitioning algorithm called $\evocut$ based on evolving sets.  Our algorithm simulates the volume-biased evolving set process until a certain stopping time is reached, then outputs the resulting set.  We prove that the algorithm has local approximation guarantee 
$O(\phi^{1/2} \log^{1/2} n)$ and expected work/volume ratio $\polylognoparen{\phi^{-1/2}}$. To prove the local approximation guarantee, we bound the rate of growth of the sets in the volume-biased ESP.  In particular, we prove a lower bound that depends on the conductance of the sets observed by the process, and an upper bound that depends on the conductance of certain sets that contain the starting vertex.  To bound the work/volume ratio, we combine a simple implementation trick with a nontrivial probabilistic analysis.  We introduce an efficient method for simulating the volume-biased ESP that updates the vertices on the boundary of the current set and ignores the vertices in the interior.  The work required to generate a sample path using this method is proportional to the {\em cost} of the sample path, which depends on the boundaries of the sets observed and the symmetric differences between successive sets.  Using a martingale argument, we prove that the expected ratio between the cost of a sample path and the volume of the set output is $\polylognoparen{\phi^{-1/2}}$, which bounds the work/volume ratio of our algorithm.  The main theorem about $\evocut$, which gives a precise statement of its work/volume ratio and local approximation guarantee, 
is stated in Section~\ref{mainresult}.  In Table~\ref{fig:local}, we compare $\evocut$ with existing local partitioning algorithms.  
\begin{table}
  \begin{center}
\begin{tabular}{|l|ll|}
  \hline
  local partitioning algorithm &  work/volume ratio &  approximation guarantee\\
    \hline
    $\justnibble$ (ST04)~\cite{Spielman:2004}        & $\polylognoparen{\phi^{-5/3}}$  & $\phi \rightarrow O(\phi^{1/3}\log^{2/3}n)$\\
    $\justnibble$ (ST08) \cite{Spielman:2008}               & $\polylognoparen{\phi^{-2}}$      & $\phi \rightarrow O(\phi^{1/2}\log^{3/2}n)$\\
    $\prnibble$  (ACL06) \cite{Andersen:2006}             & $\polylognoparen{\phi^{-1}}$      & $\phi \rightarrow O(\phi^{1/2}\log^{1/2}n)$\\
    $\evocut$  (this paper)                      & $\polylognoparen{\phi^{-1/2}}$    & $\phi \rightarrow O(\phi^{1/2}\log^{1/2}n)$\\
    \hline
\end{tabular}\hfill
\caption{The work/volume ratio and approximation guarantee of known local partitioning algorithms.  Here $n = |V|$ is the number of vertices in the graph.}\label{fig:local}
\end{center}
\end{table}

One application of our local partitioning algorithm is a fast algorithm for
finding balanced cuts.  Spielman and Teng showed how to find a balanced cut in
nearly linear time by repeatedly removing small sets from a graph using local
partitioning~\cite{Spielman:2004}.  Applying their technique with our algorithm
yields an algorithm $\evopartition$ with the following properties.  The
algorithm has complexity $\polylog{m + n\phi^{-1/2}}$, and it outputs a set of
vertices whose conductance is $O(\phi^{1/2} \log^{1/2} n)$ and whose volume at
least half that of any set with conductance at most $\phi$, where $\phi$ is an
input to the algorithm.  Our algorithm is faster by a factor of roughly
$\phi^{1/2}$ than any existing algorithm that provides a nontrivial
approximation guarantee for the balanced cut problem, but there are several
algorithms that provide stronger approximation guarantees.  The fastest
previously known algorithms for finding balanced cuts are due to
Arora-Kale~\cite{Arora:2007} and Orecchia et al.~\cite{Orecchia:2008}.  These
algorithms produce cuts with conductance $O(\phi \log n)$, and their
computational complexity is dominated by the cost of solving
polylogarithmically many single-commodity flow problems, namely $\polylog{m +
\min(n/\phi,n^{3/2})}$.  In Section~\ref{sec:balanced},
we give a more detailed description of $\evopartition$ and 
comparison with existing balanced cut algorithms.

In the remainder of this section, we state the main theorem about our local partitioning algorithm $\evocut$.
In section~\ref{sec:prelim}, we review the basic properties of the ESP and volume-biased ESP.
In section~\ref{sec:partition}, we show how to find cuts with small conductance by generating sample paths from the volume-biased ESP.
In section~\ref{sec:sample}, we describe an algorithm for simulating the volume-biased ESP.
We then construct $\evocut$ and prove the main theorem about its work/volume ratio and local approximation guarantee.
In section~\ref{sec:balanced}, we describe the balanced cut algorithm $\evopartition$.

\subsection{Main result}\label{mainresult}
Let $G=(V,E)$ be a simple undirected graph with $n=|V|$ vertices and $m=|E|$ edges.
The {\bf volume} $\vol(S)$ of a set of vertices $S \subseteq V$ 
is defined to be \[\mu(S) := \sum_{x \in S}d(x),\]
where $d(x)$ denotes the degree of the vertex $x$.
The number of edges between two sets of vertices $S$ and $R$ is written $e(S,R)$.
The complement of $S$ is written $S^{c} = V \setminus S$,
and we define $\partial(S) = e(S,\comp{S})$ to be the number of edges leaving $S$.
The {\bf conductance} of a set of vertices $S$ is defined to be
\[\phi(S) := \partial(S)/\vol(S).\]  
Notice that $\phi(V) = 0$. 
In other papers, the conductance of a set is sometimes defined to be $\partial(S)/\min(\vol(S),\vol(\comp{S}))$.
When a set is output by one of our partitioning algorithms, 
we will upper bound its volume by $(3/4)\mu(V)$, which ensures that the two definitions of conductance differ by only a constant factor.
When the base is omitted, $\log$ means $\log_{e}$.

Our main result is the analysis of the local partitioning algorithm $\evocut$.
The algorithm makes queries to an input graph $G=(V,E)$.  We assume the graph supports the following types of queries, which would be easy to support in practice by storing the graph in random access memory.  Given an arbitrary vertex $x$, let $N(x)$ be the set of vertices adjacent to a given vertex $x$.  We assume we can obtain a list of the vertices in $N(x)$ in time proportional to $|N(x)|$, and obtain a node sampled uniformly from $N(x)$ in constant time.
The following is the main theorem, which describes the work/volume ratio and local approximation guarantee of $\evocut$.  
\begin{theorem}\label{thm:evocut}
    $\evocut(v,\phi)$ takes as input a {\em starting vertex} $v \in V$ and a 
    {\em target conductance} $\phi \in (0,1)$, and outputs a set of vertices.
    For a given run of the algorithm, let $S$ be the set of vertices it outputs, and let $w$ be the amount of work it performs (the computational complexity).
   Both $S$ and $w$ depend on randomness used by the algorithm.
  \begin{enumerate}
      \item  
          Let $w/\vol(S)$ be the work/volume ratio.
          Then, 
          \[\E[w/\vol(S)] = O(\phi^{-1/2} \log^{3/2} |V|).\]

      \item 
          If $\targetset \subseteq V$ is a set of vertices that satisfies 
          \mbox{$\phi(\targetset) \leq \phi$}
          and 
          $\vol(\targetset) \leq (2/3) \vol(V)$,
          then there is a subset 
          $\targetset' \subseteq \targetset$ with volume at least $\vol(\targetset)/2$
          such that whenever $v \in \targetset'$, 
          with probability at least $1/2$
          the output set $S$ satisfies all of the following:
          \begin{enumerate}
            \item $\phi(S) = O(\phi^{1/2} \log^{1/2}|V|)$,
            \item $\vol(S) \leq (3/4)\vol(V)$,
            \item $\vol(S \cap \targetset) \geq (9/10) \vol(S)$.
          \end{enumerate}
  \end{enumerate}
\end{theorem}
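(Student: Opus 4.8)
The plan is to build $\evocut$ by combining the cut-finding procedure of Section~\ref{sec:partition} with the efficient simulation method of Section~\ref{sec:sample}, and then to prove the two claims essentially separately, since they concern different aspects of the same run. For part (1), the work/volume ratio, I would first invoke the result that the work performed by $\fastsample$ on a sample path $\ess{T}$ is proportional (up to polylog factors) to the \emph{cost} $\timecost(\ess{T})$ of that path, which is built from the boundaries $\partial(S_j)$ and the symmetric differences $S_{j}\,\Delta\,S_{j+1}$. The heart of the matter is then a martingale argument: the quantity $\timecost(\ess{j})/\vol(S_j)$ — or rather an appropriately rescaled version — should be shown to be a supermartingale (or to have controlled expected increments) under the volume-biased ESP, using the fact that the volume-biased process makes $\vol(S_j)$ a martingale and that the expected one-step growth $\E[|\vol(S_{j+1})-\vol(S_j)|]$ is governed by the isoperimetric (Cheeger-type) quantity $\psi$ of the current set. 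Running the process only until the stopping time where either the volume has grown by a target factor or the total cost has exceeded a budget, and then applying optional stopping, should yield $\E[\timecost/\vol(S)] = \polylognoparen{\phi^{-1/2}}$; translating cost into work via the simulation lemma gives the stated $O(\phi^{-1/2}\log^{3/2}|V|)$.

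For part (2), the local approximation guarantee, I would start from the two-sided growth bounds on the volume-biased ESP proved in Section~\ref{sec:partition}: a lower bound on the growth rate in terms of the conductances $\phi(S_j)$ of the sets actually observed, and an upper bound on the growth rate in terms of the conductance of certain sets containing the starting vertex $v$. The upper bound is where the hypothesis $\phi(\targetset)\le\phi$ enters: when $v$ lies in a "good" fraction $\targetset'$ of $\targetset$ (defined, as is standard for these local algorithms, by the condition that the escape probability of the random walk from $v$ out of $\targetset$ stays small for polynomially many steps — guaranteed for at least half of $\targetset$ by volume), the process started at $v$ cannot grow too fast, and in particular the set stays mostly inside $\targetset$, giving (c) $\vol(S\cap\targetset)\ge(9/10)\vol(S)$ and, combined with $\vol(\targetset)\le(2/3)\vol(V)$, also (b) $\vol(S)\le(3/4)\vol(V)$. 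The lower bound, contrapositively, forces the process either to reach a large volume quickly or to encounter along the way a set $S_j$ of small conductance; since the fast-growth scenario is ruled out for $v\in\targetset'$, with probability at least $1/2$ some observed $S_j$ has $\phi(S_j)=O(\phi^{1/2}\log^{1/2}|V|)$, and $\evocut$ is designed to output (a version of) the best such set, giving (a).

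The main obstacle, I expect, is part (1): the martingale/supermartingale controlling the cost-to-volume ratio is not the obvious one, because the cost increments involve $\partial(S_j)$ and $|S_j\,\Delta\,S_{j+1}|$, which are correlated with the volume increments in a way that is favorable on average but only after the right potential function is chosen. Getting the $\phi^{-1/2}$ exponent (rather than $\phi^{-1}$) hinges on exploiting a square-root gain — morally, that the expected symmetric difference scales like $\sqrt{\psi}\cdot\vol(S_j)$ rather than $\psi\cdot\vol(S_j)$ — and making this rigorous through optional stopping at the right stopping time, while simultaneously ensuring the volume has not overshot the $(3/4)\vol(V)$ ceiling, is the delicate step. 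A secondary (but more routine) difficulty is verifying that the "interior-ignoring" simulation of Section~\ref{sec:sample} really does cost only $\polylognoparen{1}$ per unit of path cost, including the cost of the rejection-sampling step used to draw each next set; I would treat this as a direct consequence of the implementation lemma stated there.
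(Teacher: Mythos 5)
Your outline matches the paper's proof almost exactly: $\evocut$ calls $\fastsample(v,T,\infty)$ with $T = \lfloor\phi^{-1}/100\rfloor$; part (1) is Theorem~\ref{thm:fastsample} (work is $O(\log n)\cdot\timecost(\ess{\tau})$) combined with Theorem~\ref{thm:expectedcost}, whose optional-stopping argument gives $\Eh[\timecost(\ess{\tau})/\vol(S_\tau)] = O(\sqrt{T\log\mu(V)})$, the $\sqrt{T}$ coming from Cauchy--Schwarz on $\sum\phi_j$ together with the bound $\Eh\sum\phi_j^2 = O(\log\mu(V))$ of Lemma~\ref{L:expectation}; and part (2) is exactly Theorem~\ref{thm:partition} via the escape-probability subset $\good{\targetset}{T}$ of Proposition~\ref{prop:exitprob} and Lemma~\ref{L:containment}. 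One correction to your motivation worth flagging: $\vol(S_t)$ is a martingale under the plain ESP, not the volume-biased ESP (under $\Kh$ it is $1/\vol(S_t)$ that is a martingale, and $\vol(S_t)$ drifts upward), so $R_t := \timecost(\ess{t})/\vol(S_t)$ is a \emph{sub}martingale under $\Ph$ rather than a supermartingale, and the paper subtracts the compensator $Q_t = 1 + 2\sum_{j<t}\phi_j$ to obtain the genuine martingale $M_t = R_t - Q_t$ to which optional stopping applies; also, the $(3/4)\vol(V)$ ceiling you worried about intertwining with part (1) is not needed there at all---Theorem~\ref{thm:expectedcost} holds unconditionally, and that ceiling enters only in part (2)(b).
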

The description of $\evocut$ and the proof of Theorem~\ref{thm:evocut} are given in Section~\ref{sec:sample}. 

\section{Preliminaries}\label{sec:prelim}

In this section we describe the ESP and volume-biased ESP,
the connections between them, and their relationship to conductance and random walks.  We use the terminology and basic results from~\cite{Morris:2003}.  The coupling described in section~\ref{sec:coupling} is due to Diaconis-Fill~\cite{Diaconis:1990}.  The volume-biased ESP is equivalent to one of the strong stationary duals constructed in~\cite{Diaconis:1990}, which predates the ESP and volume-biased ESP.

\subsection{Random Walk}
A random walk on the graph $G$ is a Markov chain
defined by the transition kernel
\[p(x,y) 
= 
\begin{cases}
    1/(2d(x)) & \text{if $\{x,y\} \in E$},\\
    1/2 	& \text{if $x = y$},\\
	0  	& \text{otherwise}.\\
\end{cases}
\]
Note that this is a ``lazy'' walk with holding probability $1/2$.
Given a set $S$, we let $p(x,S)$ denote the
probability of transitioning from $x$ to some vertex in $S$,
\[ p(x,S) := \sum_{y \in S}p(x,y) = 
\frac{1}{2}\left(\frac{e(x,S)}{d(x)} + \ind(x \in S) \right).
\]
Here, $\ind(\cdot)$ denotes the indicator function for an event.
We write $p^{t}(x,y)$ for the $t$-step transition probabilities,
and let $\Pwalk_{x}$ denote the probability measure for the Markov chain of a random walk started from $x$.

\subsection{The Evolving Set Process} 

The {\bf evolving set process} (ESP) is a Markov chain on subsets of the vertex set $V$.
Given the current state $S$, the next state $\newS$ is chosen by the following rule:
pick a threshold $U$ uniformly at random from the interval $[0,1]$,
and let 
\begin{align}\label{E:transition}
    \newS &= \{ y: p(y,S) \geq U \}.
\end{align}
Notice that $\emptyset$ and $V$ are absorbing states for the process.
Given a starting state $S_{0} \subseteq V$, we write $\P_{S_{0}}( \cdot ):= \P(\cdot \mid S_0 )$ to denote the probability measure for the ESP Markov chain started from $S_{0}$. Similarly, we write $\E_{S_{0}}( \cdot )$ for the expectation.  
For a singleton set, we use the shorthand $\P_{x}( \cdot ) = \P_{\{x\}}( \cdot ) $.
We define the transition kernel $\K(S,S') = \P_{S}(\newS = S')$.

\subsection{Evolving sets and conductance}\label{sec:conductance}

The following propositions relate the conductance of a set in the ESP to the change in volume in the next step.
The first proposition strengthens the fact that the sequence $(\vol(S_t))_{t \geq 0}$ is a martingale.
   
\begin{proposition}\label{prop:conditional}
Let $U$ be the uniform random variable used to generate
$\newS$ from $S$ in the ESP. 
Then,
\[
\e_S( \vol(\newS) \mid  U \leq \half) = \vol(S) + \partial(S) = \vol(S)(1 + \phi(S)).
\]
\[
\e_S( \vol(\newS) \mid  U > \half) = \vol(S) - \partial(S) = \vol(S)(1 - \phi(S)).
\]
\end{proposition}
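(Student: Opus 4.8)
The plan is to compute both conditional expectations directly, by writing $\vol(\newS)$ as a sum of indicators over vertices and interchanging sum and integral. Using the transition rule~\eqref{E:transition} and the definition of volume, $\vol(\newS)=\sum_{y\in V}d(y)\,\ind\big(p(y,S)\ge U\big)$. Since $U$ is uniform on $[0,1]$, conditioning on $\{U\le\half\}$ makes $U$ uniform on $[0,\half]$, so by Fubini (all summands nonnegative, $V$ finite)
\[
\e_S\big(\vol(\newS)\mid U\le\half\big)=2\sum_{y\in V}d(y)\int_0^{\half}\ind\big(p(y,S)\ge u\big)\,du=2\sum_{y\in V}d(y)\min\big(p(y,S),\half\big),
\]
and the same argument applied to $\{U>\half\}$ gives $\e_S\big(\vol(\newS)\mid U>\half\big)=2\sum_{y\in V}d(y)\max\big(p(y,S)-\half,\,0\big)$.

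Next I would evaluate these sums using $p(y,S)=\half\big(e(y,S)/d(y)+\ind(y\in S)\big)$. The key point is the dichotomy $p(y,S)\ge\half$ when $y\in S$ and $p(y,S)\le\half$ when $y\notin S$. In the first expression this makes $\min\big(p(y,S),\half\big)$ equal $\half$ for $y\in S$ and equal $e(y,S)/(2d(y))$ for $y\notin S$; plugging in gives $\sum_{y\in S}d(y)+\sum_{y\notin S}e(y,S)=\vol(S)+e(\comp S,S)=\vol(S)+\partial(S)$. In the second expression $\max\big(p(y,S)-\half,0\big)$ vanishes for $y\notin S$ and equals $e(y,S)/(2d(y))$ for $y\in S$, so the sum collapses to $\sum_{y\in S}e(y,S)$; since $d(y)=e(y,S)+e(y,\comp S)$ for each $y\in S$, summing over $y\in S$ shows $\sum_{y\in S}e(y,S)=\vol(S)-\partial(S)$. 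Rewriting $\vol(S)\pm\partial(S)=\vol(S)\big(1\pm\phi(S)\big)$ via $\phi(S)=\partial(S)/\vol(S)$ completes the argument.

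There is no substantial obstacle here; the only points that need care are the justification for swapping sum and integral (immediate, since the summands are nonnegative and $V$ is finite) and the two edge-counting identities $\sum_{y\notin S}e(y,S)=\partial(S)$ and $\sum_{y\in S}e(y,S)=\vol(S)-\partial(S)$, the latter using that $G$ is simple so no vertex contributes a self-loop. As a sanity check, taking the $(\half,\half)$-weighted average of the two displayed identities recovers $\e_S\big(\vol(\newS)\big)=\vol(S)$, which is the martingale property for $(\vol(S_t))_{t\ge 0}$ mentioned just before the proposition.
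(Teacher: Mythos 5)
The paper does not actually prove this proposition; it cites Morris and Peres~\cite{Morris:2003}, remarking only on a factor-of-$2$ discrepancy coming from a different normalization of conductance, so there is no in-paper proof to compare against. Your argument is correct and is the standard layer-cake/Fubini computation used in that reference: write $\vol(\newS)=\sum_{y}d(y)\ind(p(y,S)\ge U)$, condition on $\{U\le\half\}$ or $\{U>\half\}$ so that $U$ becomes uniform on a half-interval, swap sum and integral to get $2\sum_y d(y)\min(p(y,S),\half)$ and $2\sum_y d(y)\max(p(y,S)-\half,0)$, and then exploit the laziness-induced dichotomy $p(y,S)\ge\half\iff y\in S$ to reduce each sum to the edge counts $\vol(S)+\partial(S)$ and $\vol(S)-\partial(S)$ respectively. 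All steps check out, and your sanity check recovers the martingale property $\e_S\vol(\newS)=\vol(S)$ as expected. One small nit: the identity $\sum_{y\notin S}e(y,S)=\partial(S)$ does not in fact need $G$ simple (a self-loop at $y\notin S$ contributes nothing to $e(y,S)$); it is only the second identity, via $d(y)=e(y,S)+e(y,\comp{S})$ for $y\in S$, that would be affected by self-loops, and the paper's standing assumption that $G$ is simple takes care of that.
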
 
\begin{proposition}\label{prop:gauge}
The {\bf growth gauge} $\psi(S)$ of a set $S$ is defined by the following equation:
\[1-\psi(S) := \E_{S} \sqrt{\frac{\vol(\newS)}{\vol(S)}}. \]
For any set $S \subseteq V$, the growth gauge and conductance satisfy $\psi(S) \geq \phi(S)^{2}/8$.
\end{proposition}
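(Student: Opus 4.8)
The plan is to write $1-\psi(S) = \E_S \sqrt{Z}$ with $Z := \vol(\newS)/\vol(S)$, and to bound this expectation by conditioning on whether the threshold $U$ used to generate $\newS$ from $S$ in the ESP satisfies $U \le \half$ or $U > \half$. The point is that while the unconditioned martingale bound $\E_S\sqrt Z \le \sqrt{\E_S Z} = 1$ is too weak, splitting the probability space into these two equally likely halves and applying Jensen's inequality inside each half brings in exactly the refined conditional expectations supplied by Proposition~\ref{prop:conditional}.

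Concretely, first I would observe that, since $\sqrt{\cdot}$ is concave, Jensen's inequality applied separately to the conditional laws of $Z$ given $\{U \le \half\}$ and given $\{U > \half\}$ gives
\[
\E_S\sqrt{Z} \;\le\; \half\sqrt{\E_S\big(Z \mid U \le \tfrac12\big)} \;+\; \half\sqrt{\E_S\big(Z \mid U > \tfrac12\big)}.
\]
By Proposition~\ref{prop:conditional}, $\E_S(Z \mid U \le \half) = 1+\phi(S)$ and $\E_S(Z \mid U > \half) = 1-\phi(S)$, so that $\E_S\sqrt Z \le \half\big(\sqrt{1+\phi(S)} + \sqrt{1-\phi(S)}\big)$.

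Next I would estimate this last quantity by elementary bounds on the square root near $1$. Squaring it yields $\half\big(1+\sqrt{1-\phi(S)^2}\big)$; since $\partial(S)\le\vol(S)$ we have $\phi(S)\in[0,1]$, so the inequality $\sqrt{1-x}\le 1-x/2$ (valid for $x\in[0,1]$) applied with $x=\phi(S)^2$ shows this is at most $1-\phi(S)^2/4$. Taking square roots and applying $\sqrt{1-x}\le 1-x/2$ once more with $x=\phi(S)^2/4$ gives
\[
\half\big(\sqrt{1+\phi(S)} + \sqrt{1-\phi(S)}\big) \;\le\; \sqrt{1-\phi(S)^2/4} \;\le\; 1-\phi(S)^2/8.
\]
Combining, $1-\psi(S) = \E_S\sqrt Z \le 1-\phi(S)^2/8$, i.e.\ $\psi(S)\ge\phi(S)^2/8$, as claimed.

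The argument is short, and the only step that requires genuine care is the very first one: recognizing that Jensen should be applied within each half of the threshold interval so as to invoke Proposition~\ref{prop:conditional}, rather than to the full distribution of $Z$. Everything after that is routine manipulation of the square-root function, and there is no real obstacle.
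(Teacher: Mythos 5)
Your proof is correct. The paper does not actually include its own proof of Proposition~\ref{prop:gauge} but instead cites Morris--Peres~\cite{Morris:2003}; your argument (split on the two equally likely halves $\{U\le\half\}$ and $\{U>\half\}$ of the uniform threshold, apply conditional Jensen to get $\E_S\sqrt{\vol(\newS)/\vol(S)}\le\half\bigl(\sqrt{1+\phi(S)}+\sqrt{1-\phi(S)}\bigr)$ via Proposition~\ref{prop:conditional}, then close with the elementary bounds $\sqrt{1-x}\le 1-x/2$) is precisely the argument in that reference, adjusted for the factor-of-two normalization of conductance the paper notes.
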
 
Proofs of Propositions~\ref{prop:conditional} and~\ref{prop:gauge} appear in~\cite{Morris:2003}, but the constants stated there differ from ours; their definition of conductance incorporates the holding probability from the random walk, which makes it smaller than ours by a factor of 2.

\subsection{The Volume-Biased Evolving Set Process}
The {\bf volume-biased evolving set process} (volume-biased ESP) is a Markov chain on subsets of $V$ with the following transition kernel:
\begin{equation}\label{eqn:hat}
    \Kh(S,S') = \frac{\vol(S')}{\vol(S)} \K(S,S'),
\end{equation}
where $\K(S,S')$ is the 
transition kernel for the ESP.
We remark that $\Kh$ is the {\em Doob $h$-transform} of $\K$ with respect to $\vol$ (see chapter 17 of~\cite{MarkovChainsAndMixingTimes}),
and that the volume-biased ESP is equivalent to the ESP conditioned to absorb in the state $V$.
Given a starting state $S_{0}$, 
we write $\Ph_{S_0}( \cdot ):= \Ph(\cdot \mid S_0 )$ 
for the probability measure of the Markov chain.
Similarly, we write $\Eh_{S_{0}}( \cdot )$ for the expectation.  

The following proposition relates the volume-biased ESP and the ESP.
This is a standard consequence of the Doob $h$-transform, but we include a proof for completeness.
\begin{proposition}\label{prop:transform}
For any function $f$ and any starting set $S_{0} \neq \emptyset$,
\begin{align}
\label{E:hatexpectation}
\Eh_{S_0} \left[ f(\ess{n}) \right] = \e_{S_0} \left[ \frac{\vol(S_n)}{\vol(S_0)} f(\ess{n}) \right].
\end{align}
\end{proposition}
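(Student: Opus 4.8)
The plan is to prove identity~\eqref{E:hatexpectation} by induction on $n$. The essential point is that the path measure $\Ph_{S_0}$ of the volume-biased ESP is absolutely continuous with respect to the path measure $\P_{S_0}$ of the ESP on the $\sigma$-algebra generated by the first $n$ sets, with Radon--Nikodym derivative equal to $\vol(S_n)/\vol(S_0)$; identity~\eqref{E:hatexpectation} is precisely the statement that expectations transform by this density.

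First I would record two preliminary facts. Proposition~\ref{prop:conditional} gives $\e_S[\vol(\newS)] = \vol(S)$, hence $\sum_{S'}\vol(S')\K(S,S') = \vol(S)$, so $\sum_{S'}\Kh(S,S') = 1$ and $\Kh$ really is a stochastic kernel. Moreover $\Kh(S,S') = 0$ whenever $\vol(S') = 0$ (in particular $\Kh(S,\emptyset) = \frac{\vol(\emptyset)}{\vol(S)}\K(S,\emptyset) = 0$), so the volume-biased ESP started from $S_0$ with $\vol(S_0) > 0$ stays among sets of positive volume, and every denominator $\vol(S_t)$ appearing below is positive.

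For the base case $n = 0$ the identity is immediate since $\vol(S_0)/\vol(S_0) = 1$ and both sides equal $f(S_0)$. For the inductive step I would condition on $(S_0,\ldots,S_{n-1})$ and carry out the sum over the last set $S_n = S'$. Using $\Kh(S_{n-1},S') = \frac{\vol(S')}{\vol(S_{n-1})}\K(S_{n-1},S')$, the conditional sum equals $\frac{1}{\vol(S_{n-1})}\sum_{S'}\vol(S')\,\K(S_{n-1},S')\,f(S_0,\ldots,S_{n-1},S')$. Applying the induction hypothesis to the path-dependent function $(S_0,\ldots,S_{n-1})\mapsto \frac{1}{\vol(S_{n-1})}\sum_{S'}\vol(S')\K(S_{n-1},S')f(S_0,\ldots,S_{n-1},S')$ rewrites the outer $\Eh_{S_0}$ as $\e_{S_0}$ with the extra weight $\vol(S_{n-1})/\vol(S_0)$; the factor $\vol(S_{n-1})$ cancels the $1/\vol(S_{n-1})$, and by the Markov property the remaining sum is $\e_{S_0}[\vol(S_n)f(\ess{n}) \mid S_0,\ldots,S_{n-1}]$, so taking the outer expectation gives $\e_{S_0}[\frac{\vol(S_n)}{\vol(S_0)}f(\ess{n})]$, as required. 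An equivalent, perhaps more transparent, route is to write both expectations as finite sums over paths $(S_0,s_1,\ldots,s_n)$ and observe that the volume-biased path weight telescopes: $\prod_{i=1}^n \Kh(s_{i-1},s_i) = \Big(\prod_{i=1}^n \frac{\vol(s_i)}{\vol(s_{i-1})}\Big)\prod_{i=1}^n \K(s_{i-1},s_i) = \frac{\vol(s_n)}{\vol(s_0)}\prod_{i=1}^n \K(s_{i-1},s_i)$.

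I do not anticipate a genuine obstacle: once the telescoping of the volume ratios is noticed, the argument is routine bookkeeping, which is why the paper flags it as a standard consequence of the Doob $h$-transform. The only point that deserves to be stated carefully is that $\Kh$ is a stochastic kernel and that the process never leaves the collection of positive-volume sets, so that dividing by $\vol(S_t)$ is always legitimate; both follow at once from the martingale property of volume and the fact that $\vol(\emptyset) = 0$.
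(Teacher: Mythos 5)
Your proposal is correct and, at its core, matches the paper's own argument: both proofs rest on the observation that the volume-biased transition weights telescope along a sample path, yielding $\prod_{i=1}^n \Kh(s_{i-1},s_i) = \frac{\vol(s_n)}{\vol(s_0)}\prod_{i=1}^n \K(s_{i-1},s_i)$, which the paper states directly as a sum over paths and which you also note as the ``more transparent route.'' Your primary packaging as an induction on $n$ and your preliminary verification that $\Kh$ is a stochastic kernel avoiding zero-volume states are minor elaborations that do not change the substance of the argument.
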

\begin{proof}
Assume that $S_{0} \neq \emptyset$.
Let $\mathcal{C}$ be the collection of sample paths $(\path{0}{t})$ such that $\Ph_{S_{0}}(\path{1}{t}) > 0$.
If $(\path{0}{t}) \in \mathcal{C}$, then $\mu(S_{j})>0$ for all $j \in [0,t]$, so  
    \begin{align*}
        \Ph_{S_{0}}(\path{1}{t}) 
        = \prod_{j=0}^{t-1} \frac{\vol(S_{j+1})}{\vol(S_{j})} 
        \P_{S_{j}}(S_{j+1})
        = \frac{\vol(S_t)}{\vol(S_0)} \P_{S_0}(\path{1}{t}).  
    \end{align*}
Therefore,
\begin{align*}\Eh_{S_{0}} \left[ f(\path{0}{t}) \right] 
&= 
\sum_{(\path{0}{t}) \in \mathcal{C}} f(\path{0}{t}) \Ph_{S_{0}}(\path{1}{t})\\
&=
\sum_{(\path{0}{t}) \in \mathcal{C}} f(\path{0}{t}) \frac{\vol(S_t)}{\vol(S_0)} \P_{S_{0}}(\path{1}{t})\\
&=
\e_{S_0} \left[ \frac{\vol(S_t)}{\vol(S_0)} f(\path{0}{t}) \right].
\end{align*}
\end{proof}

\subsection{The Diaconis-Fill Coupling}\label{sec:coupling}
Diaconis-Fill~\cite{Diaconis:1990} introduced the following coupling between the random walk process and the volume-biased ESP.
Let $(X_{t},S_{t})$ be a Markov chain, where $X_{t}$ is a vertex and $S_{t} \subseteq V$ is a subset of vertices.
Let $\Pcoup$ be the probability measure for the Markov chain.
Given a starting vertex $x$, let $X_{0} = x$ and $S_{0} = \{x\}$, and
let $\Pcoup_{x}(\cdot) = \Pcoup(\cdot \mid X_{0} = x, S_{0} = \{x\})$.
Given the current state $(X_{t},S_{t})$, the transition probabilities are defined as follows.
\begin{align*}
  \Pcoup(X_{t+1} = y' \mid X_{t} = y, S_{t} = S) &= p(y,y'),\\
  \Pcoup(S_{t+1} = S' \mid S_{t} = S, X_{t+1} = y') &= 
    \frac{\K(S,S') \ind(y' \in S')}{\P(y' \in S_{t+1} \mid S_{t} = S)}. 
  \end{align*}
In words, we first select $X_{t+1}$ according to the random walk transition kernel,
then select $S_{t+1}$ according to the ESP transition kernel restricted to sets that contain $X_{t+1}$.
We define the transition kernel $\Kcoup((y,S),(y',S')) = \Pcoup(X_{1}=y', S_{1} = S' \mid X_{0} = y, S_{0} = S )$.

The following proposition shows that $\Pcoup_{x}$ 
is a coupling between the random walk process and the
volume-biased ESP, and furthermore the
distribution of $X_{t}$ conditioned on $(\path{0}{t})$ is the stationary distribution restricted to $S_{t}$.  
A proof of Proposition~\ref{prop:coupling} is given in chapter 17 of~\cite{MarkovChainsAndMixingTimes}.
\begin{proposition}[Diaconis and Fill]\label{prop:coupling}
  Let $(X_{t},S_{t})$ be a Markov chain started from $(x,\{x\})$ with the transition kernel $\Kcoup$.
\begin{enumerate}
\item
    The sequence $(X_{t})$ is a Markov chain started from $x$ with the transition kernel $p(\cdot,\cdot)$.
\item
    The sequence $(S_{t})$ is a Markov chain started from $\{x\}$ with transition kernel $\Kh$.
\item For any vertex $y$ and time $t \geq 0$,
    \[\Pcoup_{x}( X_{t} = y \mid \path{1}{n} ) = \ind(y \in S_{t})\frac{d(y)}{\vol(S_{t})}.\]
\end{enumerate}
\end{proposition}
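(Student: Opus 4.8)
The plan is to dispose of part (1) by inspection and to prove parts (2) and (3) together by a single induction on $t$, since these two are genuinely interdependent. For part (1): in the definition of $\Kcoup$ the conditional law of $X_{t+1}$ given $(X_t,S_t)$ is $p(X_t,\cdot)$, which does not involve $S_t$; because the pair $(X_t,S_t)$ is a Markov chain, this already gives $\Pcoup_x\!\left(X_{t+1}=\cdot \mid X_0,S_0,\dots,X_t,S_t\right)=p(X_t,\cdot)$, so $(X_t)$ is itself a Markov chain with kernel $p$ started from $x$.

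For parts (2) and (3) I would prove, by induction on $t$, the combined assertion that under $\Pcoup_x$ the trajectory $(S_0,\dots,S_t)$ has the law of the volume-biased ESP started from $\{x\}$, and that for every positive-probability trajectory $(\sigma_0,\dots,\sigma_t)$ and every vertex $y$ one has $\Pcoup_x(X_t=y\mid S_0=\sigma_0,\dots,S_t=\sigma_t)=\ind(y\in\sigma_t)\,d(y)/\vol(\sigma_t)$; since $S_0=\{x\}$ is deterministic, conditioning on $(S_0,\dots,S_t)$ is the same as conditioning on $(S_1,\dots,S_t)$. The base case $t=0$ is immediate from $\vol(\{x\})=d(x)$. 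The two computational facts that drive the inductive step are reversibility of the lazy walk, $d(y)p(y,y')=d(y')p(y',y)$ — whence $\sum_{y\in S}d(y)p(y,y')=d(y')p(y',S)$ for any set $S$ — and the identity $\P(y'\in S_{t+1}\mid S_t=S)=p(y',S)$, which follows from the threshold rule~\eqref{E:transition} and makes the denominator defining $\Kcoup$ explicit.

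In the inductive step, fixing a positive-probability trajectory with $S_t=S$, I would first combine the inductive hypothesis for $X_t$ with reversibility to obtain $\Pcoup_x(X_{t+1}=y'\mid S_0,\dots,S_t)=\sum_{y\in S}\frac{d(y)}{\vol(S)}p(y,y')=\frac{d(y')\,p(y',S)}{\vol(S)}$. Multiplying by the factor $\Pcoup_x(S_{t+1}=S'\mid X_{t+1}=y',S_0,\dots,S_t)=\K(S,S')\ind(y'\in S')/p(y',S)$ supplied by the transition rule, the $p(y',S)$ cancels and the joint conditional probability becomes $d(y')\K(S,S')\ind(y'\in S')/\vol(S)$; summing over $y'$ collapses this to $\frac{\vol(S')}{\vol(S)}\K(S,S')=\Kh(S,S')$ — which, together with the inductive hypothesis, is part (2) — and dividing the joint probability by $\Kh(S,S')$ recovers $\Pcoup_x(X_{t+1}=y'\mid S_0,\dots,S_{t+1})=\ind(y'\in S')\,d(y')/\vol(S')$, closing the induction and yielding part (3). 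One care point: we only divide by $p(y',S)$ where it is positive, but this is automatic, since $\K(S,S')>0$ together with $y'\in S'$ forces $p(y',S)>0$; and we divide by $\vol(S')$ only for nonempty $S'$, which is guaranteed because $\emptyset$ has $\Kh$-probability zero.

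The only real obstacle I anticipate is setting up the induction correctly — recognizing that (2) and (3) must be carried simultaneously, because the one-step law of $S_{t+1}$ is computed using the conditional uniformity of $X_t$, while the conditional uniformity of $X_{t+1}$ only becomes visible after the law of $S_{t+1}$ has been pinned down. Once the joint hypothesis is chosen, the step is a short computation organized around reversibility and the cancellation of $p(y',S)$.
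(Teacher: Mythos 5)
The paper does not prove this proposition itself; it defers to chapter 17 of the cited Levin--Peres--Wilmer reference. Your argument is correct and is essentially the standard proof given there: a joint induction carrying both the conditional uniformity of $X_t$ on $S_t$ and the identification of the marginal law of $(S_t)$ with $\Kh$, driven by reversibility in the form $\sum_{y\in S} d(y)p(y,y') = d(y')p(y',S)$ together with the threshold identity $\P(y'\in S_1\mid S_0=S)=p(y',S)$, and closed by the cancellation of $p(y',S)$ against the normalizing denominator in $\Kcoup$. Your care points about dividing only where $p(y',S)>0$ (guaranteed when $\K(S,S')>0$ and $y'\in S'$, since the threshold $U$ is almost surely positive) and only when $\vol(S')>0$ (guaranteed since $\Kh(S,\emptyset)=0$) are correctly noted and are exactly the details that need to be supplied.
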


\section{Local partitioning using the volume-biased evolving set process}\label{sec:partition} 
In this section, we show how to find sets with small conductance by generating sample paths from the volume-biased ESP.
The following theorem shows that if we start from a single starting vertex and simulate the volume-biased ESP for $T$ steps,  then one of the states observed is likely to have conductance $O(\sqrt{T^{-1} \log n})$.
We can also prove that all the states observed are likely to have volume at most $(3/4)\mu(V)$,
provided there exists a set $\targetset \subseteq V$ that has conductance at most $T^{-1}$, and that the starting vertex belongs to a certain subset of $A$.
\begin{theorem}\label{thm:partition}
  Fix an integer $T$, and let $\targetset \subseteq V$ be any set of vertices
  that satisfies $\vol(\targetset) \leq (2/3) \vol(V)$ and $\phi(\targetset) \leq (100T)^{-1}$.
  Then, there exists a subset $\good{\targetset}{T} \subseteq \targetset$ of volume at least $\vol(\targetset)/2$ for which the following holds.
  If $x \in \targetset_{T}$, then with probability at least $7/9$ a sample path $(\path{1}{n})$ 
  from the volume-biased ESP started from $S_{0} = \{x\}$
  will satisfy all of the following:
  \begin{enumerate}
    \item $\phi(S_{t}) < 3\theta_{T}$ for some $t \in [0,T]$, where $\theta_{T} = \sqrt{4T^{-1}\log \mu(V)}$.
      \item $\vol(S_{j}) \leq (3/4) \vol(V)$ for all $j \in [0,T]$.
      \item $\vol(S_{j} \cap \targetset) \geq (9/10) \vol(S_{j})$ for all $j \in [0,T]$.
  \end{enumerate}
\end{theorem}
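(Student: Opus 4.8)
The plan is to establish the three conclusions separately and combine them with a union bound. Conclusion~(1) will hold for \emph{every} starting vertex $x$, and comes from the growth-gauge inequality $\psi(S)\ge\phi(S)^{2}/8$ of Proposition~\ref{prop:gauge} transported to the volume-biased chain through the Doob identity of Proposition~\ref{prop:transform}. Conclusions~(2) and~(3) will hold only for $x$ in a subset $\good{\targetset}{T}\subseteq\targetset$ of volume $\ge\vol(\targetset)/2$, and will be proved by running the Diaconis--Fill coupling of Proposition~\ref{prop:coupling} from $(x,\{x\})$ and exploiting that a random walk started at a typical vertex of a set of conductance $\le(100T)^{-1}$ does not escape that set within $T$ steps.

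For conclusion~(1) I would argue first in the unbiased ESP, so that Proposition~\ref{prop:gauge} applies directly. We may assume $\theta_T\le 1/3$, for otherwise $3\theta_T>1\ge\phi(S_0)$ and~(1) holds at $t=0$. Set $\gamma:=9\theta_T^{2}/8$ and let $\tau$ be the first time $t$ with $\phi(S_t)<3\theta_T$. For $t<\tau$, Proposition~\ref{prop:gauge} gives $\psi(S_t)\ge\gamma$, hence $\E[\sqrt{\vol(S_{t+1})}\mid S_0,\dots,S_t]=(1-\psi(S_t))\sqrt{\vol(S_t)}\le(1-\gamma)\sqrt{\vol(S_t)}$, so that $Y_t:=(1-\gamma)^{-(t\wedge\tau)}\sqrt{\vol(S_{t\wedge\tau})}$ is an ESP supermartingale with $Y_0=\sqrt{\vol(S_0)}=\sqrt{d(x)}$. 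Passing to the volume-biased chain by Proposition~\ref{prop:transform}, and using $\vol(S_T)\ind(\tau>T)=\sqrt{\vol(S_T)}\cdot\sqrt{\vol(S_T)}\,\ind(\tau>T)\le\sqrt{\vol(V)}\,(1-\gamma)^{T}Y_T$ together with the supermartingale property,
\[
\Ph_{S_0}\big(\phi(S_t)\ge 3\theta_T\ \forall\, t\in[0,T]\big)=\E_{S_0}\!\left[\tfrac{\vol(S_T)}{\vol(S_0)}\,\ind(\tau>T)\right]\le\sqrt{\tfrac{\vol(V)}{\vol(S_0)}}\,(1-\gamma)^{T}\le\mu(V)^{1/2}e^{-\gamma T}.
\]
Since $\gamma T=\tfrac92\log\mu(V)$ by the choice of $\theta_T$, this is $\mu(V)^{-4}$, so~(1) fails with probability at most $\mu(V)^{-4}$.

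For~(2) and~(3) I would define $\good{\targetset}{T}:=\{x\in\targetset:\Pwalk_x(\text{the walk exits }\targetset\text{ by time }T)\le 1/100\}$. Writing $\tau_\targetset$ for the first exit time from $\targetset$, the inclusion $\{\tau_\targetset=s\}\subseteq\{X_{s-1}\in\targetset,\ X_s\notin\targetset\}$ and the stationary identity $\P_\pi(X_{s-1}\in\targetset,X_s\notin\targetset)=\partial(\targetset)/(2\mu(V))$ give, after summing over $s\le T$ and dividing by $\pi(\targetset)=\vol(\targetset)/\vol(V)$, that the $\pi$-average over $x\in\targetset$ of $\Pwalk_x(\tau_\targetset\le T)$ is at most $T\phi(\targetset)/2\le 1/200$; Markov's inequality then yields $\vol(\good{\targetset}{T})\ge\vol(\targetset)/2$. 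Now fix $x\in\good{\targetset}{T}$ and run the coupling from $(x,\{x\})$; since by Proposition~\ref{prop:coupling} the $S$-marginal is the volume-biased ESP, it suffices to estimate probabilities under $\Pcoup_x$. If~(3) fails, let $t^{\ast}$ be the first $t\le T$ with $\vol(S_t\cap\targetset)<\tfrac{9}{10}\vol(S_t)$; then $t^{\ast}$ is a stopping time for $(S_t)$, and by Proposition~\ref{prop:coupling}(3) the conditional law of $X_{t^{\ast}}$ given $S_0,\dots,S_{t^{\ast}}$ is $\pi$ restricted to $S_{t^{\ast}}$, so $X_{t^{\ast}}\notin\targetset$ with conditional probability $\vol(S_{t^{\ast}}\setminus\targetset)/\vol(S_{t^{\ast}})>1/10$. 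Therefore $\tfrac1{10}\Pcoup_x(\text{(3) fails})\le\Pcoup_x(X_t\notin\targetset\text{ for some }t\le T)\le 1/100$, so~(3) fails with probability at most $1/10$. Conclusion~(2) is then automatic: when~(3) holds, $\vol(S_j)\le\tfrac{10}{9}\vol(S_j\cap\targetset)\le\tfrac{10}{9}\vol(\targetset)\le\tfrac{10}{9}\cdot\tfrac23\vol(V)<\tfrac34\vol(V)$ for every $j$.

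Combining, all three conclusions hold simultaneously off an event of probability $\le\mu(V)^{-4}+1/10<2/9$ (using $\mu(V)\ge 2$), which is the claim. I expect the main obstacle to be the estimate for~(1): one must check that the $\sqrt{\vol(V)}$ loss incurred by the volume bias is the \emph{only} loss in going from the ESP supermartingale to the volume-biased process, and that $\theta_T=\sqrt{4T^{-1}\log\mu(V)}$ makes $\gamma T=\tfrac92\log\mu(V)$ large enough to absorb this factor as well as the union-bound slack for~(2) and~(3). The coupling steps are routine given Proposition~\ref{prop:coupling}, provided one is careful that $t^{\ast}$ is a stopping time for the $(S_t)$-filtration so that the conditional law of $X_{t^{\ast}}$ is still $\pi$ restricted to $S_{t^{\ast}}$.
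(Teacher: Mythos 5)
Your proof is correct. Conclusions (2) and (3) are handled essentially as in the paper: you rederive Proposition~\ref{prop:exitprob} (the bound on the average escape probability, via the stationary identity $\P_\pi(X_{s-1}\in\targetset, X_s\notin\targetset)=\partial(\targetset)/(2\mu(V))$) and then Lemma~\ref{L:containment} (the stopping-time argument through the Diaconis--Fill coupling), with the minor and harmless cosmetic difference that you define $\good{\targetset}{T}$ by the threshold $1/100$ rather than $T\phi(\targetset)$; since $T\phi(\targetset)\le 1/100$, your $\good{\targetset}{T}$ contains the paper's, so the volume bound carries over. Your extension of Proposition~\ref{prop:coupling}(3) to the stopping time $t^\ast$ is exactly the point the paper's Lemma~\ref{L:containment} makes, and the conditioning is legitimate because $\{t^\ast=t\}$ is $(S_0,\dots,S_t)$-measurable.

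The genuinely different ingredient is your treatment of conclusion~(1). The paper proves Lemma~\ref{L:expectation} — a second-moment bound $\Eh_{S_0}\bigl[\sum_{j<T}\phi(S_j)^2\bigr]\le 4\log\mu(V)$ obtained from the martingale $M_t=F_t\sqrt{\vol(S_0)/\vol(S_t)}$ with the data-dependent correction $F_t=\prod_{j<t}(1-\psi(S_j))^{-1}$ — and then invokes Markov's inequality (Corollary~\ref{L:minconductance}) to get failure probability $1/9$. You instead work directly in the \emph{unbiased} ESP, freeze the correction at the worst-case rate $(1-\gamma)^{-1}$ on the event that $\phi(S_t)\ge 3\theta_T$ has not yet occurred, obtain a genuine supermartingale $Y_t=(1-\gamma)^{-(t\wedge\tau)}\sqrt{\vol(S_{t\wedge\tau})}$, and only then transfer to the volume-biased chain via Proposition~\ref{prop:transform}, paying a factor $\sqrt{\vol(V)/\vol(S_0)}\le\sqrt{\mu(V)}$. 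This yields an exponentially small failure probability $\mu(V)^{-4}$ rather than $1/9$ — considerably stronger, and arguably cleaner for this one theorem because it avoids the Jensen step. The trade-off is that Lemma~\ref{L:expectation} is reused essentially verbatim in the proof of Theorem~\ref{thm:expectedcost} (via Cauchy--Schwarz on $\sum\phi_j$), where a pointwise supermartingale bound would not substitute, so the paper's choice of proving the stronger lemma once pays off structurally even though it gives a weaker probability here.

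One small point worth stating explicitly in your write-up: $\vol(S_0)=d(x)\ge 1$ is needed when you bound $\sqrt{\vol(V)/\vol(S_0)}$ by $\sqrt{\mu(V)}$; this holds because the starting vertex is non-isolated, which is implicit in the setup.
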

The proof of Theorem~\ref{thm:partition} is at the end of this section, after we present two necessary Lemmas.

Consider a sample path from the volume-biased ESP.  The following lemma shows it is unlikely for the sample path to contain many sets with large conductance.  
Intuitively, this is true because at each step the quantity $\vol(S_{t})$ tends to increase at a rate that depends on $\phi(S_{t})$.
Eventually the sample path will absorb in the state $V$, whose conductance is $\phi(V) = 0$.
\begin{lemma}\label{L:expectation}
For any starting set $S_{0} \subseteq V$ and any stopping time $\tau$ for the volume-biased ESP,
\[
\Eh_{S_{0}} \left[ \sum_{j=0}^{\tau} \phi(S_{j})^{2} \right]
\leq 
4 \Eh \log \frac{{\vol(S_{\tau})}}{\vol(S_{0})}
\leq 
4 \log \vol(V).
\]
\end{lemma}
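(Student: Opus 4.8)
The plan is to exploit the martingale structure of $\vol(S_t)$ under the volume-biased ESP, combined with the relationship between conductance and the growth gauge from Proposition~\ref{prop:gauge}. The natural quantity to track is $\log \vol(S_t)$, and I expect to show that it is a submartingale whose per-step expected increment is bounded below by a constant times $\phi(S_t)^2$. Summing these increments along the sample path (via optional stopping) and comparing with the total possible increase of $\log \vol$, which is at most $\log \vol(V) - \log \vol(S_0) \le \log \vol(V)$, then yields the stated bound.

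Concretely, first I would fix the starting set $S_0$ and consider the process $M_t := \log \vol(S_t)$ under $\Ph_{S_0}$. Using the definition of the growth gauge, $\Eh_S[\sqrt{\vol(\newS)/\vol(S)}] = 1 - \psi(S)$, I want to get a handle on $\Eh_S[\log(\vol(\newS)/\vol(S))]$. The idea is that Jensen applied to $\log$ (a concave function) together with the identity $\log x = 2\log \sqrt{x}$ gives $\Eh_S[\log(\vol(\newS)/\vol(S))] \le 2\log(1-\psi(S)) \le -2\psi(S)$; but I actually want a lower bound on the \emph{decrease}, i.e.\ an upper bound on $-\Eh_S[\log(\vol(\newS)/\vol(S))]$ in terms of $\psi(S)$, or equivalently a lower bound on the drift of $M_t$. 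Here is where I need to be careful: since $\vol(S_t)$ is a martingale under the \emph{un}biased ESP, and the volume-biased law reweights by $\vol(S_t)/\vol(S_0)$, the quantity $\vol(S_t)$ is \emph{not} a martingale under $\Ph$ — rather it is a submartingale, with $\Eh_S[\vol(\newS)] = \vol(S)(1+\text{something nonnegative})$. More useful is that $\sqrt{\vol(S_t)}$ times an appropriate factor, or directly $\log \vol(S_t)$, picks up drift. I would compute, using Proposition~\ref{prop:transform} with $f$ equal to a one-step function: $\Eh_S[\log(\vol(\newS)/\vol(S))] = \e_S[(\vol(\newS)/\vol(S))\log(\vol(\newS)/\vol(S))]$. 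The function $x \mapsto x\log x$ is convex, and by Jensen (using that $\e_S[\vol(\newS)/\vol(S)] = 1$) this is $\ge 0$; to get a quantitative lower bound I would write $x\log x \ge (x-1) + \tfrac12(x-1)^2 \cdot c$ near $x=1$ is not uniformly valid, so instead I would use the cleaner bound $x \log x \ge (x - 1) - (\sqrt{x}-1)^2$ or relate everything through $\sqrt{x}$: namely $\e_S[(\vol(\newS)/\vol(S))\log(\vol(\newS)/\vol(S))] \ge -4\,\e_S[(\sqrt{\vol(\newS)/\vol(S)} - 1)^2] \cdot(\text{const})$... Actually the sharpest route is the elementary inequality $u\log u \ge u - 1 + \tfrac14(u-1)^2$ for $u$ in a bounded range, or better, to bound things in terms of $1 - \E_S\sqrt{\vol(\newS)/\vol(S)} = \psi(S)$ directly: one has $-\log(\vol(\newS)/\vol(S)) = -2\log\sqrt{\vol(\newS)/\vol(S)}$, and $\e_S[(\vol(\newS)/\vol(S)) \cdot (-\log \sqrt{\vol(\newS)/\vol(S)})] \le$ (some constant) $\cdot\, \psi(S)$ via a convexity/AM–GM comparison. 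I would then invoke $\psi(S) \ge \phi(S)^2/8$.

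Having established that $M_t + c\sum_{j<t}\phi(S_j)^2$ is a supermartingale (equivalently $\Eh_S[\log \vol(\newS)] \ge \log \vol(S) + c\,\phi(S)^2$ for a suitable universal $c \ge 1/4$, which I would aim to pin down as exactly $1/4$ to match the constant $4$ in the statement), I would apply the optional stopping theorem to the stopping time $\tau$. Since $\log \vol(S_t)$ is bounded (it lives in $[\log 1, \log \vol(V)]$ once the path is nonempty, and the volume-biased process never visits $\emptyset$), there are no integrability obstructions, and optional stopping gives
\[
\tfrac14 \Eh_{S_0}\Big[\sum_{j=0}^{\tau}\phi(S_j)^2\Big] \le \Eh_{S_0}[\log \vol(S_\tau)] - \log\vol(S_0) = \Eh_{S_0}\log\frac{\vol(S_\tau)}{\vol(S_0)}.
\]
The final inequality $\Eh_{S_0}\log(\vol(S_\tau)/\vol(S_0)) \le \log \vol(V)$ is immediate since $\vol(S_\tau) \le \vol(V)$ and $\vol(S_0) \ge 1$.

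The main obstacle is the one-step drift estimate: getting from the growth-gauge identity (which controls $\E_S\sqrt{\vol(\newS)/\vol(S)}$ under the \emph{unbiased} ESP) to a clean lower bound on $\Eh_S[\log(\vol(\newS)/\vol(S))]$ under the \emph{biased} ESP, with a constant good enough to produce the factor $4$. I would handle the change of measure with Proposition~\ref{prop:transform} and then reduce to a deterministic inequality: for a nonnegative random variable $W = \vol(\newS)/\vol(S)$ with $\E W = 1$, show $\E[W \log W] \ge 4(1 - \E\sqrt{W})$ under the unbiased law — or whatever the correct constant relationship is. This is a pointwise convexity comparison between $w\log w$ and $1 - \sqrt{w}$ after accounting for the constraint $\E W = 1$; I expect it follows from $w \log w \ge 4(w - \sqrt{w})$ (equivalently $\sqrt{w}\log\sqrt{w} \ge 2(\sqrt{w}-1)$, i.e.\ $t \log t \ge 2(t - 1)$ for $t = \sqrt{w} \ge 0$), which is \emph{false} globally but—crucially—after subtracting the mean-zero term $(w-1)$ it becomes $w\log w - (w-1) \ge 4(w - \sqrt w) - (w-1)$, and taking expectations kills $\E(w-1) = 0$. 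I would verify the needed elementary inequality $w\log w - (w - 1) \ge -4(\sqrt w - 1)^2 + (\text{mean-zero correction})$ directly by calculus; this is the step where the precise constant $4$ in the lemma gets fixed, so it deserves the most care.
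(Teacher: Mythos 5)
Your overall strategy is valid and is a close cousin of the paper's, differing mainly in where Jensen's inequality is applied. The paper turns the identity
$\Eh_{S}\bigl[\vol(S_{1})^{-1/2}\bigr]=(1-\psi(S))\,\vol(S)^{-1/2}$
into an \emph{exact} martingale $M_{t}=F_{t}\sqrt{\vol(S_{0})/\vol(S_{t})}$ with $F_{t}=\prod_{j<t}(1-\psi(S_{j}))^{-1}$, applies optional stopping to get $\Eh M_{\tau}\le 1$, and only then uses Jensen ($\Eh\log M_{\tau}\le 0$) and $-\log(1-\psi)\ge\psi$ to extract the drift. You instead apply Jensen one step earlier, at each transition, to get a one-step submartingale bound for $\log\vol(S_{t})$. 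In fact your drift estimate follows directly from the paper's identity: Jensen gives
$-\tfrac12\,\Eh_{S}\log\vol(S_{1})\le\log\Eh_{S}\bigl[\vol(S_{1})^{-1/2}\bigr]=\log(1-\psi(S))-\tfrac12\log\vol(S)$,
hence $\Eh_{S}\log\vol(S_{1})\ge\log\vol(S)+2\psi(S)\ge\log\vol(S)+\tfrac14\phi(S)^{2}$. Then $\log\vol(S_{t})-\tfrac14\sum_{j<t}\phi(S_{j})^{2}$ is a bounded-increment submartingale and optional stopping finishes the argument, reproducing the constant $4$. So your route is correct and arguably a touch more direct, at the cost of re-proving a one-step convexity bound instead of reusing the paper's clean multiplicative martingale.

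The one concrete slip is in the elementary inequality you aim for, where both the constant and the sign are off. Via the change of measure, what you need is: for $W=\vol(S_{1})/\vol(S)\ge 0$ with $\E_{S}W=1$, $\E_{S}[W\log W]\ge 2\,(1-\E_{S}\sqrt W)=2\psi(S)$ (the factor $2$, combined with $\psi\ge\phi^{2}/8$, is exactly what produces the $4$ in the lemma). You propose $w\log w\ge 4(w-\sqrt w)$, which is false (at $w=4$: $4\log 4\approx 5.55<4(4-2)=8$), and your subtract-the-mean-zero-term rewriting is a tautology that does not rescue it. Your alternative candidate $w\log w-(w-1)\ge -4(\sqrt w-1)^{2}$ has the wrong sign: its expectation is $\E[W\log W]\ge -8\psi$, which is vacuous. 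The correct pointwise inequality is $w\log w\ge 2(w-\sqrt w)$, equivalently (setting $t=\sqrt w$) the classical $t\log t\ge t-1$; in the form you were reaching for it reads $w\log w-(w-1)\ge(\sqrt w-1)^{2}$, with a positive right-hand side. Taking expectations under $\E W=1$ then gives $\E[W\log W]\ge \E(\sqrt W-1)^{2}=2\psi(S)$, and the rest of your argument goes through.
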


\begin{proof}
Recall from Proposition~\ref{prop:gauge} that, by definition, $1-\psi(S) := \E_{S} \sqrt{\vol(S_1)/\vol(S)}$. Then,
\begin{align*}
  \Eh_{S_{t-1}} \left (\frac{1}{\sqrt{\setvol{t}}} \Big| \, S_{t-1} \right )
    &= \E_{S_{t-1}} \left (\frac{\setvol{t}}{\setvol{t-1}} \frac{1}{\sqrt{\setvol{t}}} \Big| \, S_{t-1}  \right ) \\
    &= \frac{1}{\setvol{t-1}} \E_{S_{t-1}} \left (\sqrt{\setvol{t}} \Big| \,  S_{t-1} \right ) \\
    &= \frac{1}{\setvol{t-1}} (1 - \psi(S_{t-1})) \sqrt{\setvol{t-1}}
     = \frac{1 - \psi(S_{t-1})}{\sqrt{\setvol{t-1}}}.
\end{align*}
We define
\begin{equation}\label{Mgrowth}
M_{t} := F_{t} \frac{\sqrt{\setvol{0}}}{\sqrt{\setvol{t}}},
\quad \text{where } F_{t} := \prod_{j=0}^{t-1} (1-\psi(S_{j}))^{-1},
\text{ and } F_{0} := 1,
\end{equation}

We now verify that $(M_{t})$ is a martingale in the volume-biased ESP:
\begin{align*}
  \Eh \left ( M_{t} \Big| \ess{t-1} \right ) 
  &=
  F_{t} 
  \Eh 
  \Bigl (
    \frac{\sqrt{\setvol{0}}}{\sqrt{\setvol{t}}} \, \Big| \,  S_{t-1} 
  \Bigr ) \\
  &=
  F_{t} (1-\psi(S_{t-1})) 
  \frac{\sqrt{\setvol{0}}}{\sqrt{\setvol{t-1}}}\\
  &= 
  F_{t-1} \frac{\sqrt{\setvol{0}}}{\sqrt{\setvol{t-1}}} 
  = 
  M_{t-1}.
\end{align*}
Let $\tau$ be a stopping time for the volume-biased ESP.
By the optional stopping theorem for nonnegative martingales (see~\cite{Williams}), we have $\Eh M_{\tau} \leq M_{0} = 1$.
Then by Jensen's inequality, we have $\Eh \log M_{\tau} \leq \log(\Eh M_{\tau}) = 0$.
Taking the log of \eqref{Mgrowth},
\begin{equation}\label{Mlog}
\log F_{\tau}
= 
\log M_{\tau} + \frac{1}{2} \log \frac{\vol(S_{\tau})}{\vol(S_{0})}. 
\end{equation}

Since $(1-\psi(S_{j}))^{-1} \geq e^{\psi(S_{j})}$,
\begin{equation}\label{logF}
\log F_{\tau} 
= \log \prod_{j=0}^{\tau-1} (1-\psi(S_{j}))^{-1}
\geq \sum_{j=0}^{\tau-1} \psi(S_{j}).
\end{equation}

Taking expectations in \eqref{Mlog} and \eqref{logF} yields 
\[
\Eh \sum_{j=0}^{\tau-1} \psi(S_{j})
\leq
\Eh \log F_{\tau}
=
\Eh \log M_{\tau} + \frac{1}{2} \Eh \log \frac{\vol(S_{\tau})}{\vol(S_{0})}
\leq 
\frac{1}{2} \Eh \log \frac{{\vol(S_{\tau})}}{\vol(S_{0})}.
\]

We apply the inequality $\phi(S_{j})^{2} \leq 8\psi(S_{j})$ from
Proposition~\ref{prop:gauge} to finish the proof, 
\[
\Eh \sum_{j=0}^{\tau-1} \phi(S_{j})^{2}
\leq 
8\Eh \sum_{j=0}^{\tau-1} \psi(S_{j})
\leq 
4 \Eh \log \frac{\vol(S_{\tau})}{\vol(S_{0})}
\leq 
4\log \vol(V).
\]
\end{proof}

\begin{cor}\label{L:minconductance}
Let $\theta_{T} = \sqrt{4T^{-1}\log \mu(V)}$.
For any starting set $S_{0}$, integer $T$, and constant $c \geq 0$,
\[
\Ph_{S_{0}} \left[ \min_{j < T} \phi(S_{j}) \leq \sqrt{c} \theta_{T} \right]
\geq
1-1/c.
\]
\end{cor}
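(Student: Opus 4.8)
The plan is to obtain the corollary as an immediate consequence of Lemma~\ref{L:expectation}, applied with the constant stopping time $\tau = T$, followed by a single application of Markov's inequality. First I would note that the deterministic time $T$ is a valid stopping time, so Lemma~\ref{L:expectation} gives
\[
\Eh_{S_{0}}\left[\sum_{j=0}^{T-1}\phi(S_{j})^{2}\right]
\;\le\;
\Eh_{S_{0}}\left[\sum_{j=0}^{T}\phi(S_{j})^{2}\right]
\;\le\;
4\log\vol(V).
\]

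Next I would use the key observation that the event $\{\min_{j<T}\phi(S_{j}) > \sqrt{c}\,\theta_{T}\}$ forces $\phi(S_{j})^{2} > c\,\theta_{T}^{2}$ for every $j \in \{0,\dots,T-1\}$, and hence $\sum_{j=0}^{T-1}\phi(S_{j})^{2} > T c\,\theta_{T}^{2}$ on that event. Since $\theta_{T} = \sqrt{4T^{-1}\log\mu(V)}$ and $\vol = \mu$, we have $Tc\,\theta_{T}^{2} = 4c\log\vol(V)$, so Markov's inequality applied to the nonnegative random variable $\sum_{j=0}^{T-1}\phi(S_{j})^{2}$ yields
\[
\Ph_{S_{0}}\!\left[\min_{j<T}\phi(S_{j}) > \sqrt{c}\,\theta_{T}\right]
\le
\Ph_{S_{0}}\!\left[\sum_{j=0}^{T-1}\phi(S_{j})^{2} > 4c\log\vol(V)\right]
\le
\frac{4\log\vol(V)}{4c\log\vol(V)}
= \frac1c.
\]
Passing to the complementary event then gives the stated bound of $1-1/c$.

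I do not expect any real obstacle; the only points to watch are bookkeeping. When $c < 1$ (in particular $c = 0$) the asserted bound is vacuous, so we may assume $c \ge 1$. We may also assume $G$ has at least one edge, so that $\vol(V) = 2m > 1$ and the division by $\log\vol(V)$ is legitimate; otherwise the setup is degenerate. Finally, Lemma~\ref{L:expectation} is stated with the sum running up to $\tau$ inclusive whereas the corollary concerns indices $j < T$, but restricting the sum only decreases it, so the bound transfers with room to spare.
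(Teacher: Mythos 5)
Your argument is correct and follows essentially the same route as the paper's proof: Lemma~\ref{L:expectation} bounds $\Eh_{S_0}\bigl[\sum_{j<T}\phi(S_j)^2\bigr]$ by $4\log\vol(V)$, and Markov's inequality then shows the sum is at most $4c\log\vol(V)$ with probability at least $1-1/c$, on which event some $\phi_j$ must be at most $\sqrt{c}\,\theta_T$. The only cosmetic difference is that you instantiate the lemma at $\tau=T$ and drop the final term, while the paper takes $\tau=T-1$ directly; your added remarks on $c<1$ and $\vol(V)>1$ are sound edge-case bookkeeping.
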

\begin{proof}
Fix $S_{0}$ and $T$, and consider a sample path $(\path{0}{T})$ from the volume-biased ESP.
Let $\phi_{j} := \phi(S_{j})$.
Lemma~\ref{L:expectation} implies that
$\Eh_{S_{0}} \left[ \sum_{j<T}{\phi^{2}_{j}} \right] \leq 4 \log \vol(V)$.
By Markov's inequality, 
the event $\sum_{j<T}{\phi^{2}_{j}}\leq 4c\log \vol(V)$ holds with probability at least $1-1/c$.  
If that event holds, then $\min_{j<T}(\phi_{j})  \leq \sqrt{c} \theta_{T}$.
\end{proof}
  
Now that we know a sample path $(\path{0}{T})$ from the volume-biased ESP is likely to contain a set with small conductance, 
we are halfway done with the proof of Theorem~\ref{thm:partition}.  We still need to 
show that the sets observed in the volume-biased ESP are likely to have volume at most $(3/4)\vol(V)$.
We start with a standard fact (Proposition~\ref{prop:exitprob}) that 
bounds the probability that a lazy random walk escapes from a given set $\targetset \subseteq V$.
We then prove Lemma~\ref{L:containment}, which converts this standard fact into a statement about the volume-biased ESP.
By combining these results we obtain a bound on the fraction of $S_{j}$ that is not contained in $\targetset$. 
This yields a bound on the total volume of $S_{j}$.
\begin{proposition}\label{prop:exitprob}
  Let $(X_{i})$ be a lazy random walk Markov chain starting from the vertex $x$.
  For any set $\targetset \subseteq V$ and integer $T$, let
  \[
  \exit{x}{T}{\targetset}
  :=
  \Pwalk_{x} \left[ \cup_{j=0}^{T}(X_{j} \not \in \targetset) \right],
  \]
  which is the probability that a lazy random walk starting from $x$
  leaves $\targetset$ within the first $T$ steps, and define
  \mbox{$\good{\targetset}{T} := \{ x \in A \mid  \exit{x}{T}{\targetset} \leq T\phi(\targetset) \}$.}
  Then, $\vol( \good{\targetset}{T} ) \geq (1/2)\vol(\targetset)$.
\end{proposition}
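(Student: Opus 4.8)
The plan is to bound the escape probability \emph{on average over the starting vertex}, weighted by degree, and then transfer this to individual vertices via Markov's inequality. Write $\pi(x)=d(x)/\vol(V)$ for the stationary distribution of the lazy walk $p(\cdot,\cdot)$ (the walk is reversible, so $\pi$ is indeed stationary), and let $\pi_{\targetset}$ denote $\pi$ conditioned on $\targetset$, i.e. $\pi_{\targetset}(x)=d(x)/\vol(\targetset)$ for $x\in\targetset$ and $0$ otherwise. Let $q:=\sum_{x\in\targetset}\frac{d(x)}{\vol(\targetset)}\exit{x}{T}{\targetset}$; this is precisely the probability that a lazy walk started from $\pi_{\targetset}$ leaves $\targetset$ within the first $T$ steps. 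Once we show $q\le\tfrac12 T\phi(\targetset)$, the proposition follows quickly.

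To bound $q$, I would introduce the first-exit time $\tau:=\inf\{j\ge0:X_j\notin\targetset\}$; since a walk started from $\pi_{\targetset}$ has $X_0\in\targetset$ almost surely, $q=\Pwalk_{\pi_{\targetset}}[\tau\le T]$. Using $\{\tau\le T\}=\bigcup_{j=1}^{T}\{\tau=j\}$ (a disjoint union) and the containment $\{\tau=j\}\subseteq\{X_{j-1}\in\targetset,\ X_j\notin\targetset\}$, a union bound gives $q\le\sum_{t=0}^{T-1}\Pwalk_{\pi_{\targetset}}[X_t\in\targetset,\ X_{t+1}\notin\targetset]$. Now comes the one point that needs care: $\pi_{\targetset}\le\frac{\vol(V)}{\vol(\targetset)}\pi$ pointwise, and the probability of any trajectory event is linear and monotone in the initial distribution, so $\Pwalk_{\pi_{\targetset}}[E]\le\frac{\vol(V)}{\vol(\targetset)}\Pwalk_{\pi}[E]$ for every event $E$. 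Applying this to each term and using that $X_t$ has law $\pi$ for all $t$ under $\Pwalk_{\pi}$, each summand is at most $\frac{\vol(V)}{\vol(\targetset)}\Pwalk_{\pi}[X_0\in\targetset,\ X_1\notin\targetset]=\frac{\vol(V)}{\vol(\targetset)}\sum_{x\in\targetset}\frac{d(x)}{\vol(V)}p(x,\comp{\targetset})$. A direct computation with the lazy-walk kernel gives $\sum_{x\in\targetset}d(x)p(x,\comp{\targetset})=\tfrac12\sum_{x\in\targetset}e(x,\comp{\targetset})=\tfrac12\partial(\targetset)$, so each summand is at most $\frac{\partial(\targetset)}{2\vol(\targetset)}=\tfrac12\phi(\targetset)$, and summing over the $T$ values of $t$ yields $q\le\tfrac12 T\phi(\targetset)$.

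Finally, by the definition of $\good{\targetset}{T}$, every $x\in\targetset\setminus\good{\targetset}{T}$ has $\exit{x}{T}{\targetset}>T\phi(\targetset)$, so
\[
\tfrac12 T\phi(\targetset)\ \ge\ q\ =\ \sum_{x\in\targetset}\frac{d(x)}{\vol(\targetset)}\exit{x}{T}{\targetset}\ \ge\ \frac{\vol(\targetset\setminus\good{\targetset}{T})}{\vol(\targetset)}\,T\phi(\targetset).
\]
If $\phi(\targetset)>0$ this forces $\vol(\targetset\setminus\good{\targetset}{T})\le\tfrac12\vol(\targetset)$, hence $\vol(\good{\targetset}{T})\ge\tfrac12\vol(\targetset)$; and if $\phi(\targetset)=0$ then no edge leaves $\targetset$, so $\exit{x}{T}{\targetset}=0$ for every vertex of positive degree and $\good{\targetset}{T}$ has the same volume as $\targetset$.

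I do not anticipate any real obstacle: this is a standard ``escaping mass'' estimate. The only step requiring attention is the domination argument comparing the walk from $\pi_{\targetset}$ with the walk from the true stationary distribution $\pi$, so that every time-$t$ marginal is controlled by $\pi$, combined with setting up the union bound over $\tau$ so that the laziness of the walk and the degree weighting are handled cleanly.
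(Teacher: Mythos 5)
Your proof is correct and follows essentially the same route the paper attributes to Spielman--Teng: bound the degree-weighted (i.e.\ $\pi_{\targetset}$-averaged) escape probability by $\tfrac12 T\phi(\targetset)$ via a union bound over time steps, the domination $\pi_{\targetset}\le\frac{\vol(V)}{\vol(\targetset)}\pi$, stationarity of $\pi$, and the lazy-walk computation $\sum_{x\in\targetset}d(x)p(x,\comp{\targetset})=\tfrac12\partial(\targetset)$; then conclude by Markov's inequality. The details (including the $\phi(\targetset)=0$ edge case) are handled cleanly, so there is nothing to add.
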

A proof of Proposition~\ref{prop:exitprob} appears in~\cite{Spielman:2008}. 
In Theorem 2.5 of that paper, it is shown that 
$\vol(\targetset)^{-1} \sum_{x \in \targetset} \vol(x) \exit{x}{T}{\targetset} \leq T\phi(\targetset)/2$.
This implies $\vol(A_{T}^{c}) \leq \vol(A)/2$, by Markov's inequality.
The statement of Theorem 2.5 is slightly weaker due to a minor difference in the definition of $\phi$, 
but their proof establishes the stronger statement.

\begin{lemma}\label{L:containment}
  For any vertex $x$, set $\targetset \subseteq V$, and integer $T\geq0$,
  the following holds for all $\lambda > 0$,
  \[
  \Ph_x \left[ 
  \max_{t \leq T} 
  \frac{ \vol( S_{t} \setminus \targetset)}{\vol (S_{t})}
  > \lambda \exit{x}{T}{\targetset}
  \right ]
  < \frac{1}{\lambda}.
  \]
\end{lemma}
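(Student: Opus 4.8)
The plan is to carry out the whole argument inside the Diaconis--Fill coupling $\Pcoup_x$ of the lazy random walk with the volume-biased ESP. By Proposition~\ref{prop:coupling}(2) the set-valued marginal $(S_t)$ under $\Pcoup_x$ has exactly the law of the volume-biased ESP from $\{x\}$, so it suffices to bound the probability of the stated event under $\Pcoup_x$ in place of $\Ph_x$; and by Proposition~\ref{prop:coupling}(1) the walk $(X_t)$ under $\Pcoup_x$ is a lazy random walk from $x$, so $\Pcoup_x\bigl(\cup_{j=0}^{T}\{X_j\notin\targetset\}\bigr)=\exit{x}{T}{\targetset}$.

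Write $\mathcal F_t:=\sigma(S_0,\dots,S_t)$ and $Y_t:=\vol(S_t\setminus\targetset)/\vol(S_t)$ (note $\vol(S_t)>0$ along the volume-biased process). Summing Proposition~\ref{prop:coupling}(3) over the vertices $y\notin\targetset$ gives the key identity
\[
\Pcoup_x\bigl(X_t\notin\targetset \,\big|\, \mathcal F_t\bigr)=\sum_{y\notin\targetset}\ind(y\in S_t)\frac{d(y)}{\vol(S_t)}=Y_t ;
\]
that is, $Y_t$ is the conditional probability, given $S_0,\dots,S_t$, that the coupled walk sits outside $\targetset$ at time $t$. Set $c:=\lambda\,\exit{x}{T}{\targetset}$ and define the stopping time $\tau:=\min\{t\ge 0: Y_t>c\}$ (it is a stopping time for $(\mathcal F_t)$ since each $Y_t$ is $\mathcal F_t$-measurable). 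The event whose probability we must bound is exactly $\{\tau\le T\}$.

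The core of the proof is the chain
\[
\exit{x}{T}{\targetset}\;\ge\;\Pcoup_x\bigl(X_\tau\notin\targetset,\ \tau\le T\bigr)\;=\;\Ec_x\bigl[\ind(\tau\le T)\,Y_\tau\bigr]\;\ge\;c\,\Pcoup_x(\tau\le T).
\]
The first inequality holds because on $\{X_\tau\notin\targetset,\ \tau\le T\}$ the walk lies outside $\targetset$ at the time $j=\tau\le T$, so the escape event $\cup_{j\le T}\{X_j\notin\targetset\}$ occurs. The middle equality is an optional-sampling identity proved pointwise: for each fixed $s\in\{0,\dots,T\}$ we have $\{\tau=s\}\in\mathcal F_s$, so the displayed identity gives $\Ec_x[\ind(\tau=s)\ind(X_s\notin\targetset)]=\Ec_x[\ind(\tau=s)Y_s]$, and summing over $s$ yields the claim (no general optional stopping theorem is needed). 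The last inequality is just $Y_\tau>c$ on $\{\tau\le T\}$. If $\exit{x}{T}{\targetset}>0$, dividing the resulting inequality $\exit{x}{T}{\targetset}\ge\lambda\,\exit{x}{T}{\targetset}\,\Pcoup_x(\tau\le T)$ gives $\Pcoup_x(\tau\le T)\le 1/\lambda$, and strictness is recovered because whenever $\Pcoup_x(\tau\le T)>0$ the bound $Y_\tau>c$ is strict on a set of positive measure, making the last ``$\ge$'' above strict (and if $\Pcoup_x(\tau\le T)=0$ the conclusion $<1/\lambda$ is immediate). The degenerate case $\exit{x}{T}{\targetset}=0$ is handled directly: then $\Pcoup_x(X_t\notin\targetset)=0$ for all $t\le T$, hence $Y_t=0$ almost surely for $t\le T$, and the event $\{\max_{t\le T}Y_t>0\}$ has probability $0$.

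The only real obstacle is the bookkeeping around the stopping time: verifying $\{\tau=s\}\in\mathcal F_s$ and invoking the conditional-probability identity for $Y_s$ pointwise in $s$ rather than through a general optional-sampling theorem, checking the containment $\{X_\tau\notin\targetset,\ \tau\le T\}\subseteq\cup_{j\le T}\{X_j\notin\targetset\}$, and squeezing out the strict inequality in the conclusion. Everything else is routine manipulation inside the coupling.
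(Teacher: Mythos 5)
Your proof is correct and mirrors the paper's argument: both work inside the Diaconis--Fill coupling, define the same first-passage stopping time $\tau$ for the ratio $\vol(S_t\setminus\targetset)/\vol(S_t)$, and exploit the identity that this ratio equals the conditional probability (given the set path) that the coupled walk $X_t$ lies outside $\targetset$, then chain $\exit{x}{T}{\targetset}\ge\Pcoup_x(X_\tau\notin\targetset,\ \tau\le T)>c\,\Pcoup_x(\tau\le T)$. Your version is a bit more careful about invoking optional sampling pointwise on $\{\tau=s\}$ and about the degenerate case $\exit{x}{T}{\targetset}=0$, but the underlying approach is the same as the paper's.
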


\begin{proof}
  Recall from section~\ref{sec:coupling} the coupling $\Pcoup$ between the volume-biased ESP Markov chain $(S_{t})$ and the random walk Markov chain $(X_{t})$. 
  This coupling has the property that for any $t \geq 0$,
  \[
  \Pcoup \left[ X_{t} = y \mid \path{0}{t} \right] 
  = \frac{d(y)}{\vol(S_{t})} \ind(y \in S_{t}).
  \]
  Fix a value $\gamma \in [0,1]$ and let $\tau$ be the first time $t$ when
  $\vol( S_{t} \setminus \targetset) > \gamma \vol (S_{t})$, or let $\tau = \infty$ if this does not occur.
  Consider the probability that $X_{\tau} \not \in \targetset$,
  conditioned on $S_{\tau}$:
  \begin{align*}
    \Pcoup \left[ X_{\tau} \not \in \targetset \mid S_{\tau} = S \right] 
    = \sum_{y \in S \setminus \targetset} \frac{d(y)}{\vol(S)}
    = \frac{ \vol( S\setminus \targetset)}{\vol (S)}.
  \end{align*}
  By the definition of $\tau$, we have $\Pcoup \left[ X_{\tau} \not \in \targetset \mid \tau \leq T \right] > \gamma$, so
  \begin{align*}
    \exit{x}{T}{\targetset}
    &=
    \Pcoup \left[ \cup_{j=0}^{T}(X_{j} \not \in \targetset) \right]\\
    &\geq
    \Pcoup \left[ X_{\tau} \not \in \targetset \right ]\\
    &\geq
    \Pcoup \left[ X_{\tau} \not \in \targetset \mid \tau \leq T \right]
    \Pcoup [ \tau \leq T ]\\
    &>
    \gamma \Pcoup [ \tau \leq T ].\
  \end{align*}
  Therefore
  \[
  \Ph_x \left[ 
  \max_{t \leq T} 
  \frac{ \vol( S_{t} \setminus \targetset)}{\vol (S_{t})}
  > \gamma
  \right ]
  =
  \Pcoup [ \tau \leq T ]
  <
  \frac{\exit{x}{T}{\targetset}}
  {\gamma}.
  \]
  The lemma follows by taking $\gamma = \lambda \exit{x}{T}{\targetset}$.
\end{proof}

We now combine the results from this section to prove Theorem~\ref{thm:partition}.
\begin{proof}[Proof of Theorem~\ref{thm:partition}]
  Let $\targetset \subseteq V$ be a set and $T$ be an integer that satisfy $\phi(\targetset)\leq (100T)^{-1}$. Let $\targetset_{T} \subseteq \targetset$ be the set defined in Proposition~\ref{prop:exitprob},
and assume that $x \in \good{\targetset}{T}$. Let $(\path{0}{T})$ be a sample path from the volume-biased ESP started from $S_{0}=\{x\}$.

By Corollary~\ref{L:minconductance}, with probability at least $1-1/9$ there exists some $t<T$ for which
$\phi(S_{t}) \leq 3\theta_{T}$. 
The definition of $A_{T}$ implies that $\exit{x}{T}{\targetset} \leq T\phi(\targetset) \leq 1/100.$
Lemma~\ref{L:containment} then shows that with probability at least $9/10$,
  \begin{equation*} 
      \frac{ \vol( S_{j} \setminus \targetset)}{\vol (S_{j})} \leq 
      10 \exit{x}{T}{\targetset} \leq \frac{1}{10} \quad \text{for all $j \in [0,T]$}.
  \end{equation*}
  Since $\vol(\targetset) \leq (2/3) \vol(V)$, we have, for all $j \in [0,T]$,
  \[ 
  \vol(S_{j}) 
  \leq 
  \frac{\vol(S_{j})}{\vol(S_{j} \cap \targetset)} \vol(\targetset)
  \leq (10/9)(2/3) \vol(V)
  \leq (3/4)\vol(V).
  \]
  By the union bound, with probability at least $7/9$
  the sample path $(\ess{T})$ satisfies all the conclusions of the theorem.
\end{proof}

\section{Simulating the volume-biased evolving set process}\label{sec:sample}

In the beginning of this section, we describe a subroutine $\fastsample$ that 
simulates the volume-biased ESP until a certain stopping time $\tau$ is reached, generating a sample path $(\ess{\tau})$ and producing as output the set $S_{\tau}$.  
We choose $\tau$ to be the first time that $S_{t}$ has sufficiently small conductance, or that the work performed exceeds a specified limit.  
We assign a cost to each sample path that depends on the boundaries of the sets in the path, and the difference in volume between successive sets.
We then show that the work performed by $\fastsample$ is $O(\operatorname{polylog}(n))$ times the cost of the sample path it generates.  

The algorithm $\evocut$, which we construct at the end of this section, outputs the set $S_{\tau}$ computed by $\fastsample$.
To bound the work/volume ratio of $\evocut$, we directly bound the expected ratio between the cost of $(\ess{\tau})$ and the volume of $S_{\tau}$.  

\begin{definition}
  The {\bf cost} of a sample path $(\ess{t})$ is 
  \begin{align} \label{cost}
    \timecost(\ess{t}) := \vol(S_{0}) + \sum_{j=1}^{t}
    \Big( \vol(S_{j} \Delta S_{j-1})  + \partial(S_{j-1}) \Big),
  \end{align}
where $\Delta$ denotes the symmetric difference between two sets.
\end{definition}
\begin{definition}\label{def:stopping}
  Given integers $T$ and $B$, let $\tau(T,B)$ be the first time one of the following occurs: 
  \begin{enumerate}
    \item $\phi(S_{t}) < \theta_{T}$.
    \item $t = T$ or $\timecost(\ess{t}) > B$.
  \end{enumerate}
\end{definition}

The following theorem shows that $\fastsample$
generates a sample path from the volume-biased ESP with stopping rule
$\tau(T,B)$, and that its complexity is at most $O(\log n)$ times the cost of the sample path it generates.  The complexity is also bounded by $O(B \log n)$.
\begin{theorem} \label{thm:fastsample}
  The algorithm $\fastsample(x,T,B)$ takes as input a vertex $x$,
  an integer $T \geq 0$ and an integer $B \geq 0$.  
  Let $S_{0} = \{x\}$ and let $\tau = \tau(T,B)$.
  The algorithm generates a sample path $(\ess{\tau})$ and
  outputs the last set $S_{\tau}$.  The following hold.
  \begin{enumerate}
    \item 
      The probability that $\fastsample$ generates the sample path $(\ess{\tau})$ is $\Ph_{x} [ \ess{\tau}]$.
    \item If $\fastsample$ generates $(\ess{\tau})$, then its output is 
      $S_{\tau}$ and its complexity is 
      \[ O(\log n) \min(B,\timecost(\ess{\tau})) . \]
  \end{enumerate}
\end{theorem}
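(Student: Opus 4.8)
The plan is to present $\fastsample$ as a faithful simulation of the Diaconis--Fill coupled chain $(X_t,S_t)$ from Section~\ref{sec:coupling}, equipped with a data structure that touches only the boundary of the current set, and then to match the per-step work against the per-step increment of $\timecost$.

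\textbf{The algorithm.} Initialize $X_0=x$, $S_0=\{x\}$, and record $S_0$, the counts $e(y,S_0)$ for all $y$ incident to $S_0$, the \emph{border} $\mathcal B(S_0):=\{y:0<p(y,S_0)<1\}$ (the vertices of $S_0$ with a neighbor outside, together with the vertices outside $S_0$ with a neighbor inside), and the running totals $\vol(S_0),\partial(S_0),\timecost$. At step $t\to t{+}1$: (i) take one lazy-walk step to draw $X_{t+1}\sim p(X_t,\cdot)$ in constant time; (ii) read $p(X_{t+1},S_t)$ from the stored data (legitimate since $X_{t+1}\in S_t\cup N(S_t)$, so $X_{t+1}$ is either interior to $S_t$, where $p=1$, or in $\mathcal B(S_t)$, where its count is stored), draw $U$ uniformly on $[0,p(X_{t+1},S_t)]$ to enough precision to resolve the comparisons below, and let $S_{t+1}=\{y:p(y,S_t)\ge U\}$; (iii) form $F:=S_{t+1}\Delta S_t$ by testing each $y\in\mathcal B(S_t)$ against $U$; (iv) update $S$, the counts, $\mathcal B(\cdot)$, and $\vol,\partial,\timecost$ by scanning $N(z)$ for each $z\in F$; (v) evaluate the stopping rule $\tau(T,B)$ from the totals. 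Output $S_\tau$ at the first step at which the rule fires; at that last step, skip the update (iv) (the structure is no longer needed) and simply apply the flips $F$ to $S_{\tau-1}$ in place.

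\textbf{Correctness (part 1).} By construction $(X_t,S_t)$ is the chain with kernel $\Kcoup$ started from $(x,\{x\})$: step (i) is the walk transition, and steps (ii)--(iii) realize exactly the conditional law $\K(S,S')\ind(X_{t+1}\in S')/\P(X_{t+1}\in S_1\mid S_0=S)$, because $\P(y'\in S_1\mid S_0=S)=p(y',S)$ and conditioning the ESP threshold $U$ on the event $\{U\le p(y',S)\}$ makes it uniform on $[0,p(y',S)]$. Hence by Proposition~\ref{prop:coupling}(2) the marginal law of $(\ess t)$ is $\Ph_x$. Since $\tau(T,B)$ is a stopping time depending only on $(\ess t)$ (conductance, step count, and cost are all functions of the set sequence) and $\tau\le T$, the generated path has law $\Ph_x[\ess\tau]$, which is part 1.

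\textbf{Work bound (part 2).} Two structural facts drive the estimate: $|\mathcal B(S)|\le 2\partial(S)$, since each cut edge contributes one endpoint to the inner border and one to the outer; and $S_{t+1}\Delta S_t\subseteq\mathcal B(S_t)$, since interior vertices of $S_t$ have $p=1\ge U$ and vertices with no neighbor in $S_t$ have $p=0<U$. Treating each arithmetic operation or dictionary lookup on $O(\log n)$-bit data as $O(\log n)$, step (iii) costs $O(\log n\,\partial(S_t))$, the scans in (iv) cost $O(\log n\sum_{z\in F}d(z))=O(\log n\,\vol(S_{t+1}\Delta S_t))$, and steps (i),(ii),(v) cost $O(\log n)$ apiece. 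Summing over $t<\tau$ and adding the $O(\log n\,\vol(S_0))$ initialization gives total work $O(\log n)\,\timecost(\ess\tau)$. For the $O(B\log n)$ bound, note that $\timecost(\ess j)$ is known after the cheap steps (i)--(iii), before the costly update (iv); since (iv) is omitted on the final step and $\timecost(\ess{\tau-1})\le B$ (the rule did not fire at $\tau{-}1$), the work through step $\tau{-}1$ is $O(\log n\,B)$ and the final step costs $O(\log n)(\partial(S_{\tau-1})+|F|)=O(\log n\,B)$, using $\partial(S_{\tau-1})\le\vol(S_{\tau-1})\le\timecost(\ess{\tau-1})\le B$ and $|F|\le 2\partial(S_{\tau-1})$. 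Together these give complexity $O(\log n)\min(B,\timecost(\ess\tau))$.

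\textbf{Main obstacle.} The crux is the implementation, not the probability: seeing that only the $O(\partial(S_t))$ border vertices can change membership, that the counts $e(\cdot,S)$ can be kept current by touching just the neighbor lists of the flipping vertices — at amortized cost $\vol(S_{t+1}\Delta S_t)$, which is exactly why the definition of $\timecost$ includes that term — and that the budget $B$ can be detected from the cheap portion of a step, so the expensive portion is never paid on the step that overshoots $B$. The correctness statement is then routine given Proposition~\ref{prop:coupling}.
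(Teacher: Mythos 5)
Your proposal is correct and follows essentially the same route as the paper: establish correctness by showing the algorithm implements the Diaconis--Fill coupling of Proposition~\ref{prop:coupling}, maintain a set-with-boundary structure so that the work per step is $O(\log n)(\partial(S_{t-1})+\vol(S_t\Delta S_{t-1}))$, and observe that the budget test is evaluated after the cheap boundary scan but before the costly update, which yields the $O(B\log n)$ cap. The only differences are presentational --- you make explicit the facts $|\delta(S)|\le 2\partial(S)$ and $S_{t+1}\Delta S_t\subseteq\delta(S_t)$ and the conditional-law calculation for the coupling, which the paper leaves to ``by construction'' and Proposition~\ref{prop:setwithboundary}.
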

The description of $\fastsample$ and the proof of Theorem~\ref{thm:fastsample} are given in Section~\ref{sec:fastsample}.
At a high level, $\fastsample$ simulates the volume-biased ESP 
by updating the boundary of the current set at each step.
We define $\delta(S)$ to be the {\bf two-sided vertex boundary} of $S$,
\[
\delta(S) = 
\{ y : y \in S \wedge e(y,S^{c})>0 \}
\cup
\{ y : y \in S^{c} \wedge e(y,S)>0 \}
.\]
The algorithm maintains a dynamic data structure that stores the current state $S$, 
its two-sided boundary $\delta(S)$, and the values of $p(y,S)$ for vertices in the two-sided boundary.
This allows the algorithm to ignore the vertices in the interior of the set when selecting the next state.
The complexity of $\fastsample$ is dominated by the work required to iterate over the boundary of the current set, select the next state using the stored values of $p(y,S)$, and update the set-with-boundary data structure.

In the following theorem, we bound the expected ratio between the cost of the sample path $(\ess{\tau})$ and the volume of $S_{\tau}$,
which bounds the work/volume ratio for $\evocut$.
\begin{theorem}\label{thm:expectedcost}
  For any starting set $S_{0}$ and any stopping time $\tau$ that is bounded by $T$, we have 
  \begin{align*}
    \Eh_{S_{0}}\left[ \frac{\timecost(\path{0}{\tau})}{\vol(S_{\tau})} \right] \leq 
    1+4\sqrt{T \log \mu(V)}.
  \end{align*}
\end{theorem}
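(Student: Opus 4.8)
The plan is to reduce the bound to Lemma~\ref{L:expectation} using two elementary facts about the volume-biased ESP. The first is that $1/\vol(S_t)$ is a martingale for the kernel $\Kh$: directly from \eqref{eqn:hat}, $\Eh_S[1/\vol(S_1)] = \sum_{S'} \frac{\vol(S')}{\vol(S)}\K(S,S')\frac{1}{\vol(S')} = \frac{1}{\vol(S)}$, which is the Doob-transform counterpart of the fact that $\vol(S_t)$ is a $\K$-martingale. The second is the one-step identity $\e_S[\vol(S_1 \Delta S)] = \partial(S)$: by the threshold rule \eqref{E:transition}, a vertex $y\in S$ leaves $S$ with probability $1-p(y,S) = e(y,\comp S)/(2d(y))$ and a vertex $y\notin S$ joins with probability $p(y,S) = e(y,S)/(2d(y))$, so summing $d(y)$ times these probabilities over all $y$ gives $\tfrac12 e(S,\comp S) + \tfrac12 e(\comp S,S) = \partial(S)$. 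Dividing by $\vol(S_1)$ and using \eqref{eqn:hat} to cancel it yields $\Eh_S[\vol(S_1 \Delta S)/\vol(S_1)] = \partial(S)/\vol(S) = \phi(S)$.

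Next I would write $\timecost(\path{0}{\tau}) = \vol(S_0) + \sum_{j=0}^{\tau-1}\vol(S_{j+1}\Delta S_j) + \sum_{j=0}^{\tau-1}\partial(S_j)$, divide by $\vol(S_\tau)$, and evaluate the three expectations term by term. Let $\mathcal{F}_j = \sigma(\ess{j})$. Since $\tau$ is a stopping time bounded by $T$, optional stopping gives $\Eh_{S_0}[1/\vol(S_\tau)\mid\mathcal{F}_j] = 1/\vol(S_{j\wedge\tau})$, which equals $1/\vol(S_j)$ on the event $\{\tau>j\}$. For the first term this gives $\Eh_{S_0}[\vol(S_0)/\vol(S_\tau)] = \vol(S_0)\cdot\tfrac{1}{\vol(S_0)} = 1$. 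For the third term, rewrite the sum as $\sum_{j=0}^{T-1}\ind(\tau>j)\partial(S_j)/\vol(S_\tau)$; since $\ind(\tau>j)\partial(S_j)$ is $\mathcal{F}_j$-measurable, conditioning on $\mathcal{F}_j$ turns the $j$th summand into $\Eh_{S_0}[\ind(\tau>j)\partial(S_j)/\vol(S_j)] = \Eh_{S_0}[\ind(\tau>j)\phi(S_j)]$, so the third term equals $\Eh_{S_0}[\sum_{j=0}^{\tau-1}\phi(S_j)]$. For the second term, the same argument carried out first at $\mathcal{F}_{j+1}$ (collapsing $1/\vol(S_\tau)$ to $1/\vol(S_{j+1})$, using $\{\tau>j\}=\{\tau\geq j+1\}$) and then at $\mathcal{F}_j$ (using the Markov property and the identity $\Eh_S[\vol(S_1\Delta S)/\vol(S_1)]=\phi(S)$) shows it also equals $\Eh_{S_0}[\sum_{j=0}^{\tau-1}\phi(S_j)]$. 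Hence $\Eh_{S_0}[\timecost(\path{0}{\tau})/\vol(S_\tau)] = 1 + 2\,\Eh_{S_0}[\sum_{j=0}^{\tau-1}\phi(S_j)]$; the factor $2$ comes from the symmetry between the symmetric-difference term and the boundary term.

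To finish I would bound $\Eh_{S_0}[\sum_{j=0}^{\tau-1}\phi(S_j)]$. Cauchy--Schwarz over the at most $T$ summands gives $\sum_{j=0}^{\tau-1}\phi(S_j) \leq \sqrt{T}\,(\sum_{j=0}^{\tau-1}\phi(S_j)^2)^{1/2}$, and then concavity of the square root together with Lemma~\ref{L:expectation} gives $\Eh_{S_0}[\sum_{j=0}^{\tau-1}\phi(S_j)] \leq \sqrt{T}\,(\Eh_{S_0}[\sum_{j=0}^{\tau-1}\phi(S_j)^2])^{1/2} \leq \sqrt{T}\cdot 2\sqrt{\log\vol(V)} = 2\sqrt{T\log\vol(V)}$. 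Combining the last two displays yields $\Eh_{S_0}[\timecost(\path{0}{\tau})/\vol(S_\tau)] \leq 1 + 4\sqrt{T\log\vol(V)}$, as claimed.

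I expect the only delicate point to be the measurability bookkeeping around the stopping time: justifying the reindexing $\sum_{j=1}^{\tau}(\cdots)=\sum_{j=0}^{T-1}\ind(\tau>j)(\cdots)$, that $\{\tau>j\}\in\mathcal{F}_j$, and that on $\{\tau>j\}$ the stopped martingale $1/\vol(S_{t\wedge\tau})$ agrees with $1/\vol(S_t)$ at times $j$ and $j+1$, so that the conditioning can be peeled off one level at a time. Everything else is routine; in particular $\e_S[\vol(S_1\Delta S)]=\partial(S)$ is the only place the specific transition rule \eqref{E:transition} enters, so it should be stated as a small separate claim.
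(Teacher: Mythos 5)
Your proof is correct, and it takes a structurally different route from the paper while sharing the same underlying ingredients. The paper proves that the ratio $R_t := \timecost(\ess{t})/\vol(S_t)$ satisfies $\Eh(R_t \mid \mathcal{F}_{t-1}) = R_{t-1} + 2\phi(S_{t-1})$ (via Proposition~\ref{prop:transform}), subtracts off the predictable compensator $Q_t := 1 + 2\sum_{j<t}\phi(S_j)$ to get a martingale $M_t = R_t - Q_t$, and applies optional stopping once to get $\Eh[R_\tau] = \Eh[Q_\tau]$. You instead decompose $\timecost/\vol(S_\tau)$ term by term, apply optional stopping repeatedly to the more elementary fact that $1/\vol(S_t)$ is a $\Kh$-martingale (the $h$-transform dual of $\vol(S_t)$ being a $\K$-martingale), and push the conditioning through one filtration level at a time to collapse $1/\vol(S_\tau)$ to $1/\vol(S_j)$ or $1/\vol(S_{j+1})$ on $\{\tau>j\}$. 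Both reach $\Eh[R_\tau] = 1 + 2\Eh[\sum_{j<\tau}\phi(S_j)]$, and the Cauchy--Schwarz/Jensen/Lemma~\ref{L:expectation} endgame is identical. Your version avoids introducing the auxiliary process $Q_t$ and makes the symmetry between the symmetric-difference term and the boundary term more visible; the paper's version is slicker in that it packages everything as a single martingale and needs only one invocation of optional stopping. Your one-step identity $\E_S[\vol(S_1\Delta S)] = \partial(S)$ (proved directly from the threshold rule) matches the paper's Proposition~\ref{prop:conditional} computation; both are valid. The measurability bookkeeping you flagged is handled correctly: $\{\tau>j\}\in\mathcal{F}_j$, $\vol(S_{j+1}\Delta S_j)$ is $\mathcal{F}_{j+1}$-measurable, and the stopped martingale $1/\vol(S_{t\wedge\tau})$ does agree with $1/\vol(S_t)$ at times $j$ and $j+1$ on $\{\tau>j\}$, so peeling off the conditioning works.
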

The proof of Theorem~\ref{thm:expectedcost} is the technical highlight of our analysis.
The proof uses a martingale argument and the transform between the ESP and volume-biased ESP.
It bounds the work/volume ratio in a more direct way than previous local partitioning algorithms,
which required the user to guess the approximate volume of the output set~\cite{Spielman:2004,Spielman:2008,Andersen:2006}. 
\begin{proof}
Let $c_{j}$ be the cost of the step in which $S_{j}$ is selected,
\[c_{j} := \vol(S_{j} \Delta S_{j-1}) + \partial(S_{j-1}).\]
We define $c_{0}: = \vol(S_{0})$, and recall 
that $\timecost(\ess{t}) = c_{0} + \ldots + c_{t}$.

Consider the conditional expectation of $c_{j}$ in the ESP.
We have
\begin{align*}
\E\Big( c_{j} \, \Big| \, S_{j-1} \Big) 
&= \E\Big( \vol(S_{j} \Delta S_{j-1}) \, \Big| \, S_{j-1} \Big) + \partial(S_{j-1}).
\end{align*}
We now compute the expected volume of the symmetric difference.
Let $U$ be the uniform random threshold used to select $S_{j}$ from $S_{j-1}$ in the ESP, and recall that $S_{j} \subseteq S_{j-1}$ when $U \geq 1/2$, and $S_{j-1} \subseteq S_{j}$ when $U < 1/2$.  
By Proposition~\ref{prop:conditional},
\begin{align*}
\E\Big( \vol(S_{j} \Delta S_{j-1}) \, \Big| \, S_{j-1} \Big)
&= \E\Big( |\vol(S_{j}) - \vol(S_{j-1})| \, \Big| \, S_{j-1} \Big)\\
&=
\frac{1}{2}
\E\Big( \vol(S_{j}) - \vol(S_{j-1}) \, \Big| \, S_{j-1}, U_{j} < \half \Big)\\
&\quad+
\frac{1}{2}
\E\Big( \vol(S_{j}) - \vol(S_{j-1}) \mid S_{j-1}, U_{j} \geq \half \Big)\\
&= \frac{1}{2} \partial(S_{j-1}) + \frac{1}{2} \partial(S_{j-1})\\
&= \partial(S_{j-1}).
\end{align*}
Therefore, 
$\E( c_{j} \mid S_{j-1} ) = 2\partial(S_{j-1})$.

Let $\R{t} := \frac{\C{t}}{\vol(S_{t})}$, and consider the conditional expectation of $R_{t}$ in the volume-biased ESP.
By Proposition~\ref{prop:transform},
\begin{align*}
\Eh ( \R{t} \mid \ess{t-1} )
&=
\E \left( \frac{\C{t}}{\setvol{t}} \frac{\setvol{t}}{\setvol{t-1}} \, \Big| \, \ess{t-1} \right)\\
&=
\frac{1}{\setvol{t-1}} \Big(\C{t-1} + \E ( c_{t} \mid \ess{t-1} ) \Big)\\
&=
\frac{1}{\setvol{t-1}}\left(\C{t-1} + 2\partial(S_{t-1}) \right)\\
&=
\R{t-1} + 2\phi_{t-1}.
\end{align*}
	
We define 
\[
M_{t} := R_{t} - \sumphi{t}, \qquad \text{where} \qquad  
\sumphi{t} := 1 + 2\sum_{j=1}^{t} \phi_{j-1}.
\]
By construction, 
$(M_{t})$ is a martingale in the volume-biased ESP. 
Notice that $M_{0} = R_{0}-1 = 0$.
Now let $\tau$ be an arbitrary stopping time that is bounded by $T$.
By the optional stopping theorem for martingales (see~\cite{Williams}), we have $\Eh[M_{\tau}] = M_{0} = 0$,
so $\Eh[R_{\tau}] = \Eh[\sumphi{\tau}]$.
By Cauchy-Schwarz,
\[
\sum_{j=0}^{T-1}\phi_{j} 
\leq 
\sqrt{T} \sqrt{\sum_{j=0}^{T-1}{\phi^{2}_{j}}}.
\]
By Jensen's inequality,
\begin{align*}
\Eh[R_{\tau}] =
\Eh[\sumphi{\tau}]
\leq
\Eh[\sumphi{T}]
&=
1 + 2\sum_{j=0}^{T-1} \phi_{j}\\
&\leq
1 + 2\sqrt{T} \Eh \left[ \sqrt{\sum_{j=0}^{T-1}}{\phi^{2}_{j}} \right]\\
&\leq
1 + 2\sqrt{T}  \sqrt{\Eh \left[\sum_{j=0}^{T-1}{\phi^{2}_{j}} \right]}\\
&\leq
1 + 4\sqrt{T\log \mu(V)}
.
\end{align*}
In the last step, we used Lemma~\ref{L:expectation}.
\end{proof}

We can now state the local partitioning algorithm $\evocut$ and prove the main theorem.  

\noindent
\framebox{
\begin{minipage}{\textwidth}
  {\noindent \tt \evocut}$(v,\phi)$:

  \begin{enumerate}
    \item Let $T = \lfloor \phi^{-1}/100 \rfloor$.
      If $T = 0$, then output $\{ v \}$.
    \item Let $S = \fastsample(v,T,\infty)$, and output $S$.
  \end{enumerate}

\end{minipage}
}

\begin{proof}[Proof of Theorem~\ref{thm:evocut}]
  If $T=0$, then we output the set $\{v\}$,
  which trivially satisfies the theorem.  
  Assume that $T \geq 1$. Run $\fastsample(v,T,\infty)$,
  let $w$ be the work (the complexity of the algorithm on this particular run), 
  and let $(\ess{\tau})$ be the sample path generated.
  Since $\phi(\targetset) \leq \phi \leq 1/100T$, 
  Theorem~\ref{thm:partition} shows that 
  with probability at least $7/9$, $S_{\tau}$ satisfies
  $\vol(S_{\tau}) \leq (3/4) \vol(V)$
  and $\phi(S_{\tau}) = O(\sqrt{T^{-1}\log m}) = O(\sqrt{\phi \log n}).$ 
  
  The work $w$ is dominated by the complexity of $\fastsample$, which 
  by Theorem~\ref{thm:fastsample} is $ O(\log n) \cdot \timecost(\ess{\tau})$.
  By Theorem~\ref{thm:expectedcost},
  \[
    \E\left[\frac{\timecost(\ess{\tau})}{\vol(S_{\tau})}\right]
    = O(\sqrt{T \log n}) = O(\sqrt{\phi^{-1} \log n}),
  \]
   so $\E[w/\vol(S_{\tau})] = O(\sqrt{ \phi^{-1} \log n}) \cdot O(\log n)$.
\end{proof}
We remark that the improvement in running time of $\evocut$ comes from ignoring the interior of 
the current set when simulating the volume-biased ESP.  This type of optimization seems unique to evolving sets, and is not possible
with random walks or personalized PageRank.  
We remark that, if we simulated the volume-biased ESP with a naive method that requires 
work roughly proportional to the sum of the volumes of the sets, 
the resulting work/volume ratio would be $O(T \log m)$ rather than $O(\sqrt{T \log m})$. This would match the previous
fastest local partitioning algorithm from \cite{Andersen:2006}.

\subsection{$\fastsample$}\label{sec:fastsample}
In this section we describe $\fastsample$ and prove Theorem~\ref{thm:fastsample}.
The following proposition describes the set-with-boundary data structure that is used by $\fastsample$.  
\begin{proposition}\label{prop:setwithboundary}
  There is a set-with-a-boundary data structure $S$ that supports these operations:
  \begin{itemize}
    \item {\bf add} or {\bf remove} a vertex $y$ from $S$ in time $O(d(y) \log \vol(S))$. 
    \item {\bf get} the value of $e(y,S)$, $\ind(y \in S)$, or $p(y,S)$ in time $O(\log \vol(S))$.
    \item {\bf iterate} over the vertices in the boundary $\delta(S)$ in time $O(|\delta(S)|)$.
  \end{itemize}
\end{proposition}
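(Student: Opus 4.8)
The plan is to realize the set-with-boundary structure as a small collection of balanced binary search trees (or hash-augmented balanced trees), one global structure plus one auxiliary tree per boundary vertex, so that every operation touches only $O(d(y))$ vertices and pays an $O(\log \vol(S))$ factor per dictionary access. Concretely, I would store: (i) a balanced BST $\mathcal{M}$ keyed by vertex name whose nodes record, for each vertex $y$ currently in $S$ or adjacent to $S$, the integer $e(y,S)$ and the bit $\ind(y\in S)$ — this is exactly the set of vertices in $\delta(S)$ together with the interior vertices that happen to be stored, but in fact only vertices with $e(y,S)>0$ or $y\in S$ need appear, and we additionally keep every $y\in S$; (ii) a balanced BST $\mathcal{B}$ keyed by vertex name holding precisely the vertices of the two-sided boundary $\delta(S)$, to support the \textbf{iterate} operation in time $O(|\delta(S)|)$. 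Since $|\mathcal{M}|, |\mathcal{B}| \le \vol(S) + \partial(S) \le 2\vol(S)$, the depth of each tree is $O(\log \vol(S))$, so a single key lookup or update costs $O(\log \vol(S))$.

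Next I would spell out each operation. For \textbf{get} of $e(y,S)$, $\ind(y\in S)$, or $p(y,S)$: do one lookup in $\mathcal{M}$; if $y$ is absent then $e(y,S)=0$ and $\ind(y\in S)=0$, so $p(y,S)=0$; otherwise read the stored fields and compute $p(y,S) = \tfrac12\big(e(y,S)/d(y) + \ind(y\in S)\big)$ in $O(1)$ arithmetic. Total $O(\log \vol(S))$. For \textbf{iterate}, walk $\mathcal{B}$ in order; this visits $|\delta(S)|$ nodes with $O(1)$ work each, giving $O(|\delta(S)|)$. For \textbf{add} of $y$ (with $y\notin S$): first obtain $N(y)$ in time $O(d(y))$ using the assumed graph query; set $\ind(y\in S)\leftarrow 1$ in $\mathcal{M}$ (inserting $y$ if absent); then for each neighbor $z\in N(y)$, do a lookup/insert in $\mathcal{M}$ and increment the stored count $e(z,S)$ by one (because $y$ has joined $S$, each edge $\{z,y\}$ now contributes to $z$'s edge-count into $S$); this is $d(y)$ updates, each $O(\log\vol(S))$. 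Finally recompute boundary membership in $\mathcal{B}$ for $y$ and for each $z\in N(y)$: a vertex $v$ belongs to $\delta(S)$ iff ($v\in S$ and $e(v,S^c)>0$, i.e. $d(v) - e(v,S) > 0$) or ($v\notin S$ and $e(v,S)>0$); all the quantities needed are now stored in $\mathcal{M}$, so each membership test plus the resulting insert/delete in $\mathcal{B}$ is $O(\log\vol(S))$, and there are $O(d(y))$ vertices to re-check. Summing, \textbf{add} costs $O(d(y)\log\vol(S))$. \textbf{Remove} is symmetric: flip $\ind(y\in S)$ to $0$, decrement $e(z,S)$ for each $z\in N(y)$, optionally delete nodes of $\mathcal{M}$ that have dropped to $e(\cdot,S)=0$ and $\ind(\cdot\in S)=0$, and re-evaluate $\delta(S)$-membership for $y$ and its neighbors; again $O(d(y))$ vertices, $O(\log\vol(S))$ each.

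I would then check the two subtle points. First, correctness of the stored invariants under a sequence of adds and removes: the only fields are $\ind(y\in S)$ and $e(y,S)$, both of which change by an explicit $\pm 1$ exactly when an incident endpoint changes membership, so an inductive argument shows the fields are always exact; and $\delta(S)$-membership is a pure function of these fields (and the fixed degrees $d(\cdot)$), so recomputing it only for the $O(d(y))$ vertices whose fields changed suffices. Second, the claim that $|\delta(S)| \le \vol(S)$ and more generally that the trees have size $O(\vol(S))$: every vertex stored in $\mathcal{M}$ is either in $S$ or has an edge into $S$, so it is an endpoint of some edge counted in $\vol(S)$; hence $|\mathcal{M}| \le \vol(S) + \partial(S)$, and since we only need $\vol(S)>0$ for the structure to be nonempty this is $O(\vol(S))$, making $\log\vol(S)$ the honest cost of a dictionary access. (If one prefers $O(1)$ expected dictionary cost via hashing, replace ``$\log\vol(S)$'' by ``$\log\vol(S)$ worst case'' from the balanced-tree implementation, which is what the statement claims.)

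The main obstacle is bookkeeping rather than mathematics: one must be careful that a single \textbf{add}$(y)$ or \textbf{remove}$(y)$ can change the $\delta(S)$-status of $y$ itself, of each neighbor $z\in N(y)$, and of no one else — in particular an interior vertex far from $y$ is untouched — so the $O(d(y))$-vertex re-check is both necessary and sufficient; and that the per-vertex cost is genuinely $O(\log\vol(S))$ and not, say, $O(d(z)\log\vol(S))$ for neighbors $z$ (it is not, because we only update the scalar $e(z,S)$ and test one inequality, never enumerating $N(z)$). Once these are pinned down the bounds in the three bullets follow immediately by summing $O(\log\vol(S))$ over the $O(d(y))$ affected vertices for \textbf{add}/\textbf{remove}, a single access for \textbf{get}, and an in-order traversal for \textbf{iterate}.
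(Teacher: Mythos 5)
Your proposal is correct and uses essentially the same construction as the paper: two balanced dictionaries (one for membership/edge-counts, one for the two-sided boundary), each red/black-tree operation costing $O(\log\vol(S))$ since both trees have size $O(\vol(S))$, and each \textbf{add}/\textbf{remove} touching only $y$ and its $d(y)$ neighbors. The only cosmetic difference is that you store $e(z,S)$ for every $z$ in $S$ or adjacent to $S$, whereas the paper stores $e(z,S)$ only for boundary vertices and recovers $e(z,S)$ for interior/exterior vertices via the trivial case analysis ($d(z)$ or $0$); this changes nothing in the asymptotics.
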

\begin{proof}[Proof]
The set-with-boundary data structure can be implemented using two standard dictionary data structures. 
We maintain a {\em membership dictionary} $\mathcal{M}$ that contains the vertices in the set $S$, and a {\em boundary dictionary} $\mathcal{B}$ that contains the vertices in $\delta(S)$ and stores the associated value $\mathcal{B}(y) = e(y,S)$ for each $y \in \delta(S)$.  These dictionaries must support the following operations: inserting and deleting a key and its value, checking whether a given key is in the dictionary, and looking up the value associated with a key.  A red/black tree supports these operations in $O(\log N)$ worst case time per operation, where $N$ is the number of keys currently in the tree (see~\cite{CLRS}).

The value of $\ind(y \in S)$ can be computed by checking whether $y \in \mathcal{M}$.
For any vertex $y \in V$, the value of $e(y,S)$ can be computed
using two lookups, one into the membership dictionary and one into the boundary dictionary:
\begin{align}\label{edgecases}
e(y,S) = 
\begin{cases}
  \mathcal{B}(y) & \text{if $y \in \mathcal{B}$,} \\ 
  d(y) & \text{if $y \not \in \mathcal{B}$ and $y \in \mathcal{M}$,} \\
  0    & \text{if $y \not \in \mathcal{B}$ and $y  \not \in \mathcal{M}$.}
\end{cases}
\end{align}
It is straightforward to compute $p(y,S)$ from $e(y,S)$ and $\ind(y \in S)$.

Each time a node is added or removed from $S$, we update the membership
dictionary. For each neighbor $z \sim y$ we
increment or decrement the value of $e(z,S)$ in the boundary dictionary. For each node $z$ that was updated (including $y$ and its neighbors), we determine whether the node is contained in $\delta(S)$ by examining the values of $e(z,S)$ and $(z \in S)$, then add or remove $z$ from the boundary dictionary when necessary. In total, 
adding or removing $y$ takes $O(d(y))$ dictionary operations. 
The size of either dictionary is $O(\vol(S))$, so each dictionary operation takes time $O(\log \vol(S))$.  
\end{proof}

We now describe $\fastsample$.
The input is a starting vertex $x$, a time limit $T\geq0$, and a budget $B\geq0$.  The output is a set $S_{\tau}$ sampled from the volume-biased ESP with the stopping rule $\tau = \tau(T,B)$.
The algorithm simulates the volume-biased ESP using the coupling described in Proposition~\ref{prop:coupling}.  It uses an instance $S$ of the set-with-boundary data structure to maintain the current state $S_{t}$, and also stores a vertex $X$ that represents the current walk position $X_{t}$.  
Initially, $S = S_{0} = \{x\}$ and $X = X_{0} = x$.
The algorithm proceeds in steps. At the beginning of step $t$, we have $S = S_{t-1}$ and $X = X_{t-1}$.
The algorithm continues until the stopping time $\tau$
is reached, then outputs $S_{\tau}$.   

Each step has two stages.  
In the first stage we select $S_{t}$ and compute a list of the vertices that need to be added or removed from $S_{t-1}$ to form $S_{t}$. 
This stage requires $O(1)+O(\partial(S_{t-1}))$ operations.
We stop after the first stage if $\timecost(\path{0}{t}) >B$. 
Otherwise, we proceed to the second stage in which we update $S$ to $S_{t}$, which requires $O(1)+O(\vol(S_{t} \Delta S_{t-1}))$ operations.  
Each operation is either a constant time operation or a dictionary operation requiring time $O(\log n)$.  

In stage 1, we begin with $X=X_{t-1}$, then update the walk particle.
Given that $X_{t-1} = x_{t-1}$, we choose $X_{t} = x_{t}$ with probability $p(x_{t-1},x_{t})$, and update $X=X_{t}$. 
We assume that a random neighbor of $x_{t-1}$ can be selected in constant time.
We compute $p(x_{t},S)$ by a lookup into the set-with-boundary data structure, and select a random threshold $Z$ uniformly from the interval $[0,p(x_{t},S)]$. At this point we define $S_{t} = \{y \mid p(y,S_{t-1}) \geq Z\}$, but we do not yet update $S$ to reflect $S_{t}$.  Instead 
we create a list $D$ of the vertices in the set difference $S_{t} \Delta S_{t-1}$.
We populate the list by iterating over each node $y \in \delta(S)$, looking up the value of $p(y,S)$, and comparing this value with the threshold $Z$.
While doing this, we update the values of $\vol(S_{t})$ and $\timecost(\ess{t})$.  We then check whether either of the stopping conditions $t=T$ or $\timecost(\ess{t}) > B$ is satisfied.  
If so, then $\tau = t$, so we stop and output $S_{t} = S_{t-1} \Delta D$.  Otherwise, we proceed to the next stage.

In stage 2, we update $S$ to $S_{t}$ by adding or removing the vertices from $D$.  While making these updates to $S$, we also update 
$\partial(S_{t-1})$ to $\partial(S_{t})$.
We compute $\phi(S_{t}) = \partial(S_{t})/\vol(S_{t})$ and check whether $\phi(S_{t}) < \theta_{T}$.
If so, then we halt and output the set $S_{t}$.  Otherwise, we proceed to the next step.

\noindent
\framebox{
\begin{minipage}{\textwidth}
	\begin{small}
    {\noindent \tt \fastsample}$(x,T,B)$:

    \vspace{.1cm} {\bf Input and output:}\\
    The input is a starting vertex $x$, and two integers $T\geq0$ and $B\geq0$.\\
    The output is a set $S_{\tau}$ sampled from the volume-biased ESP 
    with stopping rule $\tau = \tau(T,B)$.  

    \vspace{.1cm} {\bf Internal state:}\\
    $S$ = an instance of the set-with-boundary data structure.\\
    $X$ = the current location of the random walk particle.\\
    The algorithm also maintains the current values of 
    $\partial(S)$, $\vol(S)$, and $\timecost(\ess{})$.
    
    \vspace{.1cm} {\bf Algorithm:}\\
    Initially, let $S = S_{0} = \{x\}$ and $X = x_{0} = x$.\\
    At the beginning of step $t$, we have $S = S_{t-1}$ and $X = X_{t-1}$. \\
    For $t=1 \ldots \tau$, do step $t$ as follows:
    \begin{enumerate}
      \item {\bf Stage 1.}  Select the vertices to add or remove from $S$:
		    \begin{enumerate}
              \item   Given $X_{t-1}=x_{t-1}$, select $X_{t}=x_{t}$ with probability $p(x_{t-1},x_{t})$ and update $X \leftarrow x_{t}$.
                \item Lookup $p(x_{t},S_{t-1})$ and pick $Z$ uniformly at random from the interval $[0,p(x_{t},S_{t-1})]$. 
                \item Define $S_{t} = \{y \mid p(y,S_{t-1}) \geq Z\}$.
                \item Compute a list $D$ of the vertices in $S_{t} \Delta S_{t-1}$ as follows. \\
                  For each $y \in \delta(S_{t-1})$:
		    \begin{enumerate}
              \item lookup $p(y,S_{t-1})$.
              \item if $y \in S_{t} \Delta S_{t-1}$, then add $y$ to $D$.
		    \end{enumerate}
                \item Update $\vol(S_{t})$ and $\timecost(\ess{t})$.
                \item If $t=T$ or $\timecost(\ess{t}) > B$, then $t = \tau$, 
                  so halt and output $S_{t} = S_{t-1} \Delta D$.\\
                  Otherwise, proceed to the next stage.
	\end{enumerate}

\item {\bf Stage 2.} Update $S$:
\begin{enumerate}
\item Update $S$ to $S_{t} = S_{t-1} \Delta D$ by adding or removing the vertices in $D$ from $S$.
\item Update $\partial(S_{t})$ and compute $\phi(S_{t}) = \partial(S_{t})/\vol(S_{t})$. 
\item If $\phi(S_{t}) < \theta_{T}$, then $t = \tau$, so halt and output $S_{t}$.\\
	  Otherwise, proceed to the next step.
\end{enumerate}

\end{enumerate}

\end{small}
\end{minipage}
}
      
\begin{proof}[Proof of Theorem~\ref{thm:fastsample}]
By construction, $\fastsample$ simulates the coupling
from Section~\ref{sec:coupling}.  
By Proposition~\ref{prop:coupling}, the sequence $(\path{0}{\tau})$ it generates is a sample path from the volume-biased ESP. 

Let $c_{t} = O(\partial(S_{t-1}) + \vol(S_{t-1} \Delta S_{t}))$.
We will show that the number of operations performed in step $t$
is $O(c_{t})$.  Each operation is either a constant time operation or a dictionary operation requiring time $O(\log n)$. 
The number of operations performed in stage 1 is
dominated by part (d) in the pseudocode, in which lookup operations are performed for each vertex in $O(\delta(S_{t-1}))$.  This requires $O(1)+O(\partial(S_{t}))$ operations.
The number of operations performed in stage 2 is dominated by part (a), in which the vertices from $S_{t-1} \Delta S_{t}$ are added or removed from the set-with-boundary data structure $S$. 
By Proposition~\ref{prop:setwithboundary}, this requires $O(1) + O(\vol(\setdiff{t}{t-1}))$ operations.
In total, the number of operations required in step $t$ is 
\begin{align*}
O(\partial(S_{t-1}) + \vol(S_{t-1} \Delta S_{t})) + O(1)
= O(c_{t}).
\end{align*}
The $O(1)$ term above can be ignored safely because $\partial(S_{t-1}) > 0$ for $t \leq \tau$.
We define $c_{0} = d(x_{0}) = \mu(S_{0})$ to account for the work required to create $S_{0}$.
Then, the total number of operations performed by $\fastsample$
is $O(c_{0} +, \ldots, + c_{\tau}) = O(\timecost(\ess{\tau}))$.

If $\timecost(\path{0}{\tau}) > B$, then $\fastsample$ halts after stage 1 during step $\tau$, and the number of operations performed in step $\tau$ is $O(\partial(S_{\tau-1}))$.
The total number of operations performed is therefore 
\[O(\timecost(\ess{\tau-1})) + \partial(S_{\tau-1}) = O(2B),\]
because
$\partial(S_{\tau-1}) \leq \vol(S_{\tau-1}) \leq \timecost(\ess{\tau-1}) \leq B$.  The theorem follows.
\end{proof}

\section{Finding balanced cuts}\label{sec:balanced}

Spielman and Teng used their local partitioning algorithm $\justnibble$ to construct an algorithm $\partition$ that finds a cut with small conductance and approximately maximal volume~\cite{Spielman:2008}.
To do this, they constructed a subroutine called $\nibble$
that applies their local partitioning algorithm from a random starting vertex with a random budget.  The time complexity of $\nibble$ is nearly independent of the size of the input graph, and the set that it output contains, in expectation, a small fraction of any set in the graph that has sufficiently small conductance.  In this section, we construct an analogous subroutine called $\evonibble$ by making small modifications to $\evocut$. We then describe the algorithm $\evopartition$ that 
results from substituting $\evonibble$ for $\nibble$ 
in Spielman and Teng's construction of $\partition$.

\subsection{$\evonibble$}
In this section we describe the subroutine $\evonibble$.

\noindent
\framebox{
\begin{minipage}{\textwidth}
    {\noindent \tt \evonibble}$(\phi)$:
\begin{enumerate}
  \item Let $T = \lfloor \phi^{-1}/100 \rfloor$, and let $\theta_{T} = \sqrt{4T^{-1}\log \mu(V)}$.
  \item Choose a random vertex $X\in V$ with probability 
      $P(X=x) = d(x)/\vol(V)$. 
  \item Choose a random budget as follows. 
    Let $\Jmax = \lceil \log_{2} \vol(V) \rceil$, and let $J$ be an integer from $[0,\Jmax]$ chosen 
    with probability $P(J=j) = \sigma 2^{-j}$, where $\sigma$ is a proportionality constant.  Let $B_{J} = 8\gamma2^{J}$, where $\gamma = 1+4\sqrt{T \log \vol(V)}$.
  \item Compute $S = \fastsample(X,T,B_{J})$.
  \item
    If $\phi(S) \leq 3\theta_{T}$ and $\vol(S) \leq (3/4) \vol(V)$, then output $S$. 
    Otherwise output $\emptyset$.
\end{enumerate}
\end{minipage}
}

\begin{theorem}\label{thm:evonibble}
  The randomized algorithm $\evonibble(\phi)$ takes as input $\phi \in (0,1)$ and outputs a set $S \subseteq V$.
  The following hold:
  \begin{enumerate}
    \item The expected complexity is $O( \phi^{-1/2}\log^{5/2} \vol(V))$.
    \item Either $S = \emptyset$, or $S$ satisfies
      $\phi(S) = O(\sqrt{\phi \log \vol(V)})$ and $\vol(S) \leq (3/4) \vol(V)$.
    \item For any set $\targetset \subseteq V$ that satisfies $\vol(\targetset) \leq (2/3) \vol(V)$ and \mbox{$\phi(\targetset) \leq \phi$},
      \[ \E \left( \frac{\vol(S \cap \targetset)}{\vol(\targetset)} \right)  \geq \frac{1}{20\vol(V)}.\]
  \end{enumerate}
\end{theorem}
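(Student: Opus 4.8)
The plan is to prove the three parts in order, dispatching the first two quickly and spending the effort on the third, whose argument follows the template of Spielman and Teng's analysis of $\nibble$. Part 2 is immediate from the acceptance test in Step 5 of $\evonibble$ together with the choice $T=\lfloor\phi^{-1}/100\rfloor$, which makes $3\theta_{T}=O(\sqrt{\phi\log\vol(V)})$. For part 1, Theorem~\ref{thm:fastsample} bounds the work of the single call to $\fastsample$ by $O(\log n)\min(B_{J},\timecost(\ess{\tau}))\le O(\log n)\,B_{J}$, and the remaining bookkeeping in $\evonibble$ costs $O(\log n)$; taking the expectation over the random exponent $J$ gives $\E[B_{J}]=8\gamma\sigma\sum_{j=0}^{\Jmax}2^{-j}2^{j}=O(\gamma\log\vol(V))$, and since $\gamma=1+4\sqrt{T\log\vol(V)}=O(\phi^{-1/2}\log^{1/2}\vol(V))$ and $\log n\le\log\vol(V)$, the expected complexity is $O(\phi^{-1/2}\log^{5/2}\vol(V))$.

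For part 3, fix $A$ with $\phi(A)\le\phi$ and $\vol(A)\le(2/3)\vol(V)$. Since $100T\phi\le1$ we have $\phi(A)\le(100T)^{-1}$, so Theorem~\ref{thm:partition} applies and produces a subset $A_{T}\subseteq A$ with $\vol(A_{T})\ge\vol(A)/2$. Fix a starting vertex $x\in A_{T}$, and consider the unconstrained run $\fastsample(x,T,\infty)$, whose stopping time is $\tau=\tau(T,\infty)\le T$ and whose output is $S_{\tau}$. By Theorem~\ref{thm:partition} (which uses Corollary~\ref{L:minconductance} with $c=9$), with probability at least $7/9$ the event $\mathcal{G}$ holds, on which $\phi(S_{\tau})\le3\theta_{T}$, $\vol(S_{\tau})\le(3/4)\vol(V)$, and $\vol(S_{\tau}\cap A)\ge(9/10)\vol(S_{\tau})$; the factor $3$ here is exactly the slack built into the Step 5 acceptance test. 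Separately, Theorem~\ref{thm:expectedcost} applied to the bounded stopping time $\tau$ gives $\Eh_{x}\!\left[\timecost(\ess{\tau})/\vol(S_{\tau})\right]\le\gamma$, so Markov's inequality yields $\Ph_{x}\!\left[\timecost(\ess{\tau})\le8\gamma\,\vol(S_{\tau})\right]\ge7/8$. Let $\mathcal{H}$ be the intersection of $\mathcal{G}$ with this last event; a union bound gives $\Ph_{x}[\mathcal{H}]\ge1-\tfrac{2}{9}-\tfrac{1}{8}>\tfrac12$.

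The heart of the argument is then a cancellation between the output volume and the probability of drawing the right budget. On $\mathcal{H}$ set $j^{*}:=\lceil\log_{2}\vol(S_{\tau})\rceil$; since $1\le\vol(S_{\tau})\le\vol(V)$, this lies in $[0,\Jmax]$ and $2^{j^{*}-1}<\vol(S_{\tau})\le2^{j^{*}}$, so $B_{j^{*}}=8\gamma2^{j^{*}}\ge8\gamma\,\vol(S_{\tau})\ge\timecost(\ess{\tau})$. Because the cost sequence is nondecreasing, the run $\fastsample(x,T,B_{j^{*}})$ never exceeds its budget before time $\tau$, hence generates the same path, halts at $\tau$, and outputs $S_{\tau}$, which passes the Step 5 test. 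Since $J$ is chosen independently of the sample path, $\Pr[J=j^{*}]=\sigma2^{-j^{*}}$, and on $\{J=j^{*}\}$ the set $S$ returned by $\evonibble$ satisfies $\vol(S\cap A)\ge(9/10)\vol(S_{\tau})>(9/10)2^{j^{*}-1}$. Consequently, using $\vol(S_{\tau}\cap A)\cdot2^{-j^{*}}>9/20$ and then $\sigma>1/2$ and $\Ph_{x}[\mathcal{H}]>1/2$,
\begin{align*}
\E\!\left[\vol(S\cap A)\mid X=x\right]
\;\ge\;\Eh_{x}\!\left[\ind(\mathcal{H})\,\vol(S_{\tau}\cap A)\cdot\sigma2^{-j^{*}}\right]
\;\ge\;\tfrac{9}{20}\,\sigma\,\Ph_{x}[\mathcal{H}]
\;\ge\;\tfrac{1}{10}.
\end{align*}
Averaging over the random start $X$, which has $\Pr[X=x]=d(x)/\vol(V)$, and using $\vol(A_{T})\ge\vol(A)/2$,
\begin{align*}
\E\!\left[\frac{\vol(S\cap A)}{\vol(A)}\right]
\;\ge\;\frac{1}{\vol(A)}\sum_{x\in A_{T}}\frac{d(x)}{\vol(V)}\cdot\frac{1}{10}
\;=\;\frac{\vol(A_{T})}{10\,\vol(A)\,\vol(V)}
\;\ge\;\frac{1}{20\,\vol(V)},
\end{align*}
which is the claimed bound.

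The step I expect to be the main obstacle is the cancellation above: one must see that the $2^{j^{*}}$ lower bound on $\vol(S\cap A)$ for the accepted set exactly offsets the $\sigma2^{-j^{*}}$ probability of drawing the matching budget exponent, which is what keeps the final estimate at $\Omega(1/\vol(V))$ rather than degrading to $\Omega(1/\vol(V)^{2})$. Making this go through forces the use of Theorem~\ref{thm:expectedcost}: we need that, with constant probability, the cost of the run up to the good stopping time is at most $O(\gamma)$ times the output volume, so that a budget of size $\Theta(\gamma\,2^{j^{*}})$ is already enough — this is precisely the place where $\evocut$'s improved work/volume ratio enters the balanced-cut construction. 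A secondary point requiring care is establishing $\mathcal{G}$ with the correct constants, in particular that $\phi(S_{\tau})\le3\theta_{T}$ holds at the stopping time $\tau$ itself (not merely $\phi(S_{t})\le3\theta_{T}$ for some unspecified $t\le T$); this is what the definition of $\tau(T,B)$, combined with Corollary~\ref{L:minconductance}, is designed to deliver.
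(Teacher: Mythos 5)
Your proof is correct and follows the paper's argument closely. For part~3, the paper buckets the good event by $\vol(S_{\tau}) \in [2^{j},2^{j+1})$ (events $D_{j}$) and sums over $j$, pairing each with $\{J=j\}$, whereas you fix the single path-dependent exponent $j^{*}=\lceil\log_{2}\vol(S_{\tau})\rceil$ and condition on $\{J=j^{*}\}$; these are equivalent, both exploiting the same cancellation between the $\Omega(2^{j})$ lower bound on $\vol(S_{\tau}\cap A)$ and the $\sigma 2^{-j}$ budget probability, and your version is if anything a bit cleaner (the paper also uses a $4\gamma$ Markov threshold with failure probability $1/4$ where you use $8\gamma$ with $1/8$ --- both suffice). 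The concern you flag about $\phi(S_{\tau})\le 3\theta_{T}$ holding at the stopping time $\tau$ itself points at a genuine constant mismatch in the paper: Corollary~\ref{L:minconductance} with $c=9$ only yields $\min_{j<T}\phi(S_{j})\le 3\theta_{T}$, while the stopping rule in Definition~\ref{def:stopping} triggers at $\phi(S_{t})<\theta_{T}$, so the two thresholds do not line up (the stopping threshold should evidently be $3\theta_{T}$); since the paper's own proofs of Theorems~\ref{thm:evocut} and~\ref{thm:evonibble} make the same implicit leap, this is a shared imprecision rather than a gap you have introduced.
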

\begin{proof}
   First we prove conclusion (1).  Let $W$ be the complexity of the algorithm.
   By Theorem~\ref{thm:fastsample}, 
   we have $(W\mid J) = O(B_{J} \log \vol(V)) = O(\gamma 2^{J}\log \vol(V))$, where $\gamma = O(\sqrt{\phi^{-1} \log \vol(V)})$.
   The expected complexity is 
   \begin{align*}
       E(W)
       &= \sum_{j \in [0,\Jmax]} E(W | J = j) P(J=j)\\
       &= \sum_{j \in [0,\Jmax]} O(\gamma 2^{j}\log \vol(V)) O(2^{-j})\\
       &= O(\gamma \log \vol(V) \Jmax)\\
       &= O( \phi^{-1/2}\log \vol(V)^{5/2}).
   \end{align*}

   Conclusion (2) is immediate from the definition of the algorithm.  We now prove conclusion (3).
   Let $S_{out}$ be the output of $\evonibble$.  Let $X$ be the starting vertex.
   Let $A$ be a set that satisfies the requirements of conclusion (3), 
   and let $A_{T}\subseteq A$ be the subset described in Proposition~\ref{prop:exitprob}.
   We will prove the following:
   \begin{equation}\label{nibbleeqn} 
     \text{if $x \in A_{T}$, then $E(\mu(S_{out} \cap A) \mid X=x) \geq 1/10$.}
   \end{equation}
   After that, conclusion (3) follows by taking the expectation over the choice of the starting vertex:
   \begin{align*}
     E(\mu(S_{out} \cap A))
       =    \sum_{x \in V} E(\mu(S_{out} \cap A \mid X=x) P(X=x)
       \geq (1/10) P(X \in A_{T})
       \geq (1/20) \vol(A)/\vol(V).
   \end{align*}
   We now prove~\eqref{nibbleeqn}.
   Consider a sample path from the volume-biased ESP started from $\{X\}$,
   and let $\tau = \tau(T,\infty)$.
   Let $D$ be the event that all of the following hold:
   \begin{enumerate}
     \item $\timecost(\path{0}{\tau}) \leq 4\gamma \vol(S_{\tau})$,
     \item $\phi(S_{\tau}) \leq 3\theta_{T}$,
     \item $\vol(S_{\tau}) \leq (3/4) \vol(V)$,
     \item $\vol(S_{\tau} \cap \targetset) \geq (9/10) \vol(S_{\tau})$. 
   \end{enumerate}
   Combining Theorem~\ref{thm:expectedcost} and Theorem~\ref{thm:partition} shows that if $x \in A_{T}$, 
   then \mbox{$P(D \mid X = x) \geq 1/4$}. 
   The subroutine $\fastsample(X,T,B)$ returns the set $S_{\tau(T,B)}$ rather than $S_{\tau}$.  To deal with this,
   we define the events \mbox{$D_{j} = D \wedge (\vol(S_{\tau}) \in [2^{j},2^{j+1}))$.}
   Note that
   \[(\timecost(\path{0}{\tau}) \mid D_{j}) \leq 8\gamma 2^{j} = B_{j}.\]
   This implies that if the event $(D_{j} \wedge (J\geq j))$ holds,
   then $S_{out} = S_{\tau(T,B)} = S_{\tau}$, 
   and furthermore \mbox{$\mu(S_{out} \cap A) \geq (9/10) 2^{j}$.}
   For any $x \in A_{T}$, we have 
   \begin{align*}
             E(\mu(S_{out} \cap A) \mid X=x) 
     &\geq   \sum_{j \in [0,\Jmax]} E(\mu(S_{out} \cap A) \mid X=x, J=j, D_{j}) P(D_{j} \mid X=x) P(J=j) \\
     &\geq   \sum_{j \in [0,\Jmax]} (9/10)2^{j} (\sigma 2^{-j}) P(D_{j} \mid X=x) \\
     &\geq   (9/10) \sigma \sum_{j \in [0,\Jmax]} P(D_{j} \mid X=x) \\
     &\geq   (9/10) \sigma P(D \mid X=x)\\
     &\geq   1/10.
   \end{align*}
   This establishes~\eqref{nibbleeqn} and completes the proof.
\end{proof}

\subsection{$\evopartition$}
The algorithm $\evopartition$ described in the following theorem
can be constructed by substituting the subroutine $\evonibble$ for $\nibble$
in Spielman and Teng's algorithm $\partition$.
We omit the proof of Theorem~\ref{thm:balanced} and the description of the algorithm, and refer the reader to Theorem 3.2 in~\cite{Spielman:2008}.  At a high level, the algorithm applies the nibbling subroutine, removes the resulting cut from the graph, then repeats.  It stops after $\polylognoparen{m}$ steps or when a large fraction of the graph has been removed.
\begin{theorem}\label{thm:balanced}
  The randomized algorithm $\evopartition(\phi)$ takes an input $\phi \in (0,1)$, and it outputs a set $S \subseteq V$.
  The expected complexity is $\polylognoparen{m\phi^{-1/2}}$.
     With probability at least $1/2$, both of the following hold: 
  \begin{enumerate}
   \item $\phi(S) = O(\sqrt{\phi \log m})$ and $\vol(S) \leq (7/8) \vol(V)$.
   \item At least one of the following holds:
      \begin{enumerate}
       \item $\vol(S) \geq (1/4) \vol(V)$
       \item For any set $\targetset \subseteq V$ that satisfies $\phi(\targetset) \leq \phi$ and $\vol(\targetset) \leq (2/3) \vol(V)$, 
         we have $\vol(S \cap \targetset) \geq \vol(\targetset)/2$. 
      \end{enumerate}
 \end{enumerate}
\end{theorem}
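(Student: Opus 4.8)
The plan is to follow Spielman and Teng's construction of $\partition$ from $\nibble$ (Theorem~3.2 of~\cite{Spielman:2008}) essentially verbatim, replacing $\nibble$ by $\evonibble$; the three conclusions of Theorem~\ref{thm:evonibble} are exactly the properties of the nibbling subroutine used in their proof, so the argument carries over with at most cosmetic changes to the constants. The algorithm $\evopartition$ maintains a set $W$ of removed vertices, initially empty: in round $i$ it runs $\evonibble(\phi)$ on the induced residual graph $G_i := G[V\setminus W]$, receives a set $C_i$, and replaces $W$ by $W\cup C_i$; it halts and outputs $S := W$ as soon as either $\vol(W) > (1/4)\vol(V)$ or a fixed cap of $N = \polylognoparen{m}$ rounds is reached. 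We would establish the three assertions in turn.

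For the complexity, by Theorem~\ref{thm:evonibble}(1) each round costs $O(\phi^{-1/2}\operatorname{polylog} m)$ in expectation (running $\evonibble$ on a subgraph only lowers the bound), there are at most $N = \polylognoparen{m}$ rounds, and maintaining $G_i$ under vertex deletions costs $\polylognoparen{m}$ overall, so linearity of expectation gives expected complexity $\polylognoparen{m\phi^{-1/2}}$. Conclusion~(1) holds deterministically: by Theorem~\ref{thm:evonibble}(2) each nonempty $C_i$ satisfies $\partial_{G_i}(C_i) \le O(\sqrt{\phi\log m})\,\vol_{G_i}(C_i)$, the $C_i$ are disjoint with union $W$, and each $G$-edge leaving $W$ is -- in the round in which its $W$-endpoint was added -- a $G_i$-edge leaving $C_i$, so $\partial_G(W) \le \sum_i \partial_{G_i}(C_i) \le O(\sqrt{\phi\log m})\sum_i \vol_{G_i}(C_i) \le O(\sqrt{\phi\log m})\,\vol_G(W)$; and just before the round that first pushes $\vol(W)$ past $(1/4)\vol(V)$ we have $\vol(W) \le (1/4)\vol(V)$ while the added piece has $\vol_G(C_i) \le (3/4)(\vol(V)-\vol(W)) + \partial_G(W)$ by Theorem~\ref{thm:evonibble}(2), so the final volume is at most $((1/16) + 3/4)\vol(V) + O(\sqrt{\phi\log m})\vol(V) \le (7/8)\vol(V)$ once $\phi$ is below an absolute constant (larger $\phi$ makes the conductance guarantee vacuous).

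For conclusion~(2): if the algorithm halted because $\vol(W) > (1/4)\vol(V)$ then (2a) holds, so assume it ran all $N$ rounds, whence $\vol(W) \le (1/4)\vol(V)$ throughout. Fix $A$ with $\phi(A)\le\phi$ and $\vol(A)\le(2/3)\vol(V)$, and put $A_i := A\setminus W$ after round $i$. The key point is that $\partial_{G_i}(A_i) \le \partial_G(A) \le \phi\vol(A)$ is an absolute bound and that $A_i$ loses volume to $W$ only across $\partial_G(A)$ and the edges running into $A\cap W$; using this, together with $\vol(W)\le(1/4)\vol(V)$ to control the total volume of $G_i$, one checks -- after tuning constants and running $\evonibble$ with a slightly smaller target conductance -- that as long as at least half of $\vol(A)$ lies outside $W$, the residual copy $A_i$ still satisfies the hypotheses of Theorem~\ref{thm:evonibble} inside $G_i$. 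Part~(3) of that theorem then gives $\E[\vol_G(A_i) - \vol_G(A_{i+1}) \mid \mathcal F_i] = \Omega(\vol(A)/\vol(V))$; since $\vol_G(A_i)$ is nonincreasing, writing $\tau^*$ for the first round with $\vol_G(A_{\tau^*}) < \vol(A)/2$, the quantity $\vol_G(A_{i\wedge\tau^*}) + (i\wedge\tau^*)\cdot\Omega(\vol(A)/\vol(V))$ is a nonnegative supermartingale, so $\E[N\wedge\tau^*] = O(\vol(V)) = \polylognoparen{m}$; taking $N$ a sufficiently large $\polylognoparen{m}$ and applying Markov's inequality yields $\Pr[\tau^* \ge N] \le 1/2$. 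Thus with probability at least $1/2$ some round has $\vol(A\cap W) \ge \vol(A)/2$, and since $W$ only grows, $\vol(S\cap A) \ge \vol(A)/2$.

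We expect the main obstacle to be that conclusion~(2b) is quantified over \emph{all} admissible $A$, whereas the supermartingale argument above controls one set at a time and a union bound over all candidate sets would be exponential. This is the genuinely delicate step in~\cite{Spielman:2008}; it is handled there by exploiting the obliviousness of the nibbling guarantee -- a single round makes expected progress against every small-conductance set at once -- so that it suffices to bound the total residual small-conductance mass rather than reason set by set, and we would reproduce that argument. A secondary technical nuisance, also present in~\cite{Spielman:2008}, is the constant bookkeeping needed to certify that the residual copy of $A$ retains small conductance in $G_i$ once part of its boundary has been absorbed into $W$; this is why the clean argument yields only ``roughly half'' of $\vol(A)$, and recovering the exact constant in~(2b) takes a little extra care.
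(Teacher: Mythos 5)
Your approach matches the paper's: the paper gives no proof of Theorem~\ref{thm:balanced}, simply stating that $\evopartition$ is obtained by substituting $\evonibble$ for $\nibble$ in Spielman and Teng's $\partition$ and citing Theorem~3.2 of~\cite{Spielman:2008} for the analysis. Your sketch of the algorithm, the complexity accounting, the conductance/volume bookkeeping, and your identification of the union-bound subtlety over all admissible $\targetset$ as the delicate point are precisely what reproducing that omitted argument from~\cite{Spielman:2008} entails.
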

We remark that the complexity of $\evopartition$ can be reduced from $\polylognoparen{m\phi^{-1/2}}$ to $\polylog{m+n\phi^{-1/2}}$  
by applying the sparsification technique of Bencz{\'u}r-Karger~\cite{BenczurKarger}. 

In Table~\ref{fig:balanced}, we summarize the complexity and approximation guarantee
of selected algorithms for the balanced cut problem.  
For all the algorithms listed, we first apply the Bencz{\'u}r-Karger~\cite{BenczurKarger} sparsification technique to the graph.
We state the running times in terms of $\phi$, which for the first four algorithms
is specified as part of the input (see Theorem~\ref{thm:balanced}).  The next three algorithms solve a different formulation of the balanced cut problem,
where the volume of the set $A \subseteq V$ is specified rather than the conductance.  For the purpose of comparison, we 
translate their approximation guarantees to our formulation.  See the original papers for the precise statements.

\begin{table}\begin{center}
\begin{tabular}{|l|ll|}
    \hline
  balanced cut algorithm &  complexity &  approximation\\
    \hline
    $\partition$ (ST04)~\cite{Spielman:2004}        & $\polylog{m + n\phi^{-5/3}}$  & $\phi \rightarrow O(\phi^{1/3}\log^{2/3}n)$\\
    $\partition$ (ST08)~\cite{Spielman:2008}        & $\polylog{m + n\phi^{-2}}$    & $\phi \rightarrow O(\phi^{1/2}\log^{3/2}n)$\\
    $\prpartition$ (ACL06)~\cite{Andersen:2006}    & $\polylog{m + n\phi^{-1}}$      & $\phi \rightarrow O(\phi^{1/2}\log^{1/2}n)$\\
    $\evopartition$ (this paper) & $\polylog{m + n\phi^{-1/2}}$              & $\phi \rightarrow O(\phi^{1/2}\log^{1/2}n)$\\
    \hline
    Arora-Hazan-Kale~\cite{AroraHazanKale} & $\polylognoparen{n^{2}}$                            & $\phi \rightarrow O(\phi \log^{1/2} n)$ \\
    Arora-Kale~\cite{Arora:2007}           & $\polylog{m + \min\{n^{3/2},n\phi^{-1}\}}$   & $\phi \rightarrow O(\phi \log n)$\\
    Orecchia et al.~\cite{Orecchia:2008}   & $\polylog{m + \min\{n^{3/2},n\phi^{-1}\}}$   & $\phi \rightarrow O(\phi \log n)$\\
    \hline
    Recursive spectral (power method)   & $\polylognoparen{n^{2}\lambda^{-1}}$        & $\phi \rightarrow O(\phi^{1/2})$\\
    Recursive spectral (Lanczos) & $\polylognoparen{n^{2}\lambda^{-1/2}}$   & \mbox{$\phi \rightarrow O(\phi^{1/2})$}\\
    Recursive spectral (ST solver)~\cite{DBLP:journals/corr/abs-cs-0607105}& $\polylognoparen{n^{2}}$   & \mbox{$\phi \rightarrow O(\phi^{1/2})$}\\
    \hline
\end{tabular}\hfill
\end{center}
\caption{Comparison of selected algorithms for finding balanced cuts.
         Here $n = |V|$, $m = |E|$, and $\phi$ is an input to the algorithm. 
         }\label{fig:balanced}
\end{table}

Spielman and Teng's algorithm $\partition$~\cite{Spielman:2004,Spielman:2008}
was the first balanced cut algorithm with a nearly-linear complexity.
The algorithm of Khandekar-Rao-Vazirani~\cite{Khandekar:2006} finds a balanced cut by solving single commodity
flow problems, and outputs a cut of
conductance $O(\phi \log^{2} n)$ in time $\polylognoparen{T_{flow}}$, where
$T_{flow} = \polylog{m + \min( n^{3/2},n\phi^{-1})}$.
The algorithm of Arora and Kale~\cite{Arora:2007} 
outputs a cut of conductance $O(\phi \log n)$ in time
$\polylognoparen{T_{flow}}$.  
The algorithm of Orecchia et al.~\cite{Orecchia:2008} obtains the same running time and approximation 
within the cut-matching framework of~\cite{Khandekar:2006}.
The best approximation is $O(\phi \log^{1/2} n)$, due to Arora-Rao-Vazirani~\cite{ARV}.
The fastest algorithm that attains this approximation is due to Arora-Hazan-Kale~\cite{AroraHazanKale}.

Spectral partitioning methods can be applied recursively to find balanced cuts (see~\cite{Spielman:1996}).  
As far as we know, there is no good way to lower bound the balance of the cut output by the basic spectral partitioning method. As a result,
the best known bound on the depth of the recursion is $\Omega(n)$, and
the best known time bound for finding balanced cuts using recursive spectral partitioning is $\Omega(n^{2})$. 
Let $\phi_{*} = \min\{ \phi(A) : \, A \subseteq V, \, \vol(A) \leq \vol(V)/2\}$. Let $\lambda$ be the smallest nonzero eigenvalue of the normalized Laplacian, which satisfies $\lambda \leq 2\phi_{*}$.
The basic spectral partitioning method produces a set of conductance $O(\sqrt{\lambda}) = O(\sqrt{\phi_{*}})$~(see \cite{Mihail:1989, Chung:1997}).
If the power method is used to compute an approximate eigenvector, then the complexity of finding an unbalanced cut using the spectral method is $\polylognoparen{n\lambda^{-1}}$.  If the Lanczos algorithm is used instead, then the complexity improves to $\polylognoparen{n\lambda^{-1/2}}$.  The complexity can be further improved to $\polylognoparen{n}$
by using the linear system solver of Spielman-Teng to compute the pseudo-inverse, then applying the inverse power method (see~\cite{DBLP:journals/corr/abs-cs-0607105}). 

\bibliographystyle{abbrv}
\bibliography{evolving}

\begin{thebibliography}{10}

\bibitem{Andersen:2006}
R.~Andersen, F.~R.~K. Chung, and K.~J. Lang.
\newblock Local graph partitioning using pagerank vectors.
\newblock In {\em FOCS '06}, pages 475--486, 2006.

\bibitem{AndersenLang:2006}
R.~Andersen and K.~J. Lang.
\newblock Communities from seed sets.
\newblock In {\em WWW '06}, pages 223--232, New York, NY, USA, 2006. ACM Press.

\bibitem{AroraHazanKale}
S.~Arora, E.~Hazan, and S.~Kale.
\newblock $\theta(\sqrt {\log n})$ approximation to sparsest cut in
  $\tilde{O}(n^2)$ time.
\newblock In {\em FOCS '04}, pages 238--247, Washington, DC, USA, 2004. IEEE
  Computer Society.

\bibitem{Arora:2007}
S.~Arora and S.~Kale.
\newblock A combinatorial, primal-dual approach to semidefinite programs.
\newblock In {\em STOC '07}, pages 227--236, New York, NY, USA, 2007. ACM.

\bibitem{ARV}
S.~Arora, S.~Rao, and U.~Vazirani.
\newblock Expander flows, geometric embeddings and graph partitioning.
\newblock In {\em In STOC '04}, pages 222--231, 2004.

\bibitem{BenczurKarger}
A.~A. Bencz{\'u}r and D.~R. Karger.
\newblock Approximating $s-t$ minimum cuts in $\tilde{O}(n^{2})$ time.
\newblock In {\em STOC '96}, pages 47--55, 1996.

\bibitem{Chung:1997}
F.~Chung.
\newblock {\em Spectral graph theory}, volume Number 92 in CBMS Regional
  Conference Series in Mathematics.
\newblock American Mathematical Society, 1997.

\bibitem{CLRS}
T.~H. Cormen, C.~E. Leiserson, R.~L. Rivest, and C.~Stein.
\newblock {\em Introduction to Algorithms, Second Edition}.
\newblock The MIT Press, September 2001.

\bibitem{Diaconis:1990}
P.~Diaconis and J.~A. Fill.
\newblock Strong stationary times via a new form of duality.
\newblock {\em The Annals of Probability}, (18):1483--1522, 1990.

\bibitem{Flake:2000}
G.~Flake, S.~Lawrence, and C.~L. Giles.
\newblock Efficient identification of web communities.
\newblock In {\em Sixth ACM SIGKDD}, pages 150--160, Boston, MA, August 20--23
  2000.

\bibitem{Khandekar:2006}
R.~Khandekar, S.~Rao, and U.~Vazirani.
\newblock Graph partitioning using single commodity flows.
\newblock In {\em STOC '06}, pages 385--390, New York, NY, USA, 2006. ACM
  Press.

\bibitem{Leskovec:2008}
J.~Leskovec, K.~J. Lang, A.~Dasgupta, and M.~W. Mahoney.
\newblock Statistical properties of community structure in large social and
  information networks.
\newblock In {\em WWW '08}, pages 695--704, 2008.

\bibitem{MarkovChainsAndMixingTimes}
D.~A. Levin, Y.~Peres, and E.~L. Wilmer.
\newblock {\em Markov Chains and Mixing Times}.
\newblock Amer. Math. Soc. (2008), to appear. Available at
  http://darkwing.uoregon.edu/~dlevin/MARKOV/.

\bibitem{Mihail:1989}
M.~Mihail.
\newblock Conductance and convergence of markov chains---a combinatorial
  treatment of expanders.
\newblock In {\em FOCS '89}, pages 526--531, 1989.

\bibitem{Montenegro}
R.~Montenegro.
\newblock Sharp edge, vertex, and mixed {C}heeger inequalities for finite
  {M}arkov kernels.
\newblock {\em Electron. Comm. Probab.}, 12:377--389 (electronic), 2007.

\bibitem{Montenegro2}
R.~{Montenegro}.
\newblock {The simple random walk and max-degree walk on a directed graph}.
\newblock {\em Random Structures \& Algorithms}, To appear.

\bibitem{Morris:2003}
B.~Morris and Y.~Peres.
\newblock Evolving sets and mixing.
\newblock In {\em STOC '03}, pages 279--286, 2003.

\bibitem{Orecchia:2008}
L.~Orecchia, L.~J. Schulman, U.~V. Vazirani, and N.~K. Vishnoi.
\newblock On partitioning graphs via single commodity flows.
\newblock In {\em STOC '08}, pages 461--470, New York, NY, USA, 2008. ACM.

\bibitem{Spielman:1996}
D.~A. Spielman and S.-H. Teng.
\newblock Spectral partitioning works: Planar graphs and finite element meshes.
\newblock In {\em FOCS '96}, pages 96--105, 1996.

\bibitem{Spielman:2004}
D.~A. Spielman and S.-H. Teng.
\newblock Nearly-linear time algorithms for graph partitioning, graph
  sparsification, and solving linear systems.
\newblock In {\em STOC '04}, pages 81--90, New York, NY, USA, 2004. ACM Press.

\bibitem{DBLP:journals/corr/abs-cs-0607105}
D.~A. Spielman and S.-H. Teng.
\newblock Nearly-linear time algorithms for preconditioning and solving
  symmetric, diagonally dominant linear systems.
\newblock {\em CoRR}, abs/cs/0607105, 2006.

\bibitem{Spielman:2008}
D.~A. Spielman and S.-H. Teng.
\newblock A local clustering algorithm for massive graphs and its application
  to nearly-linear time graph partitioning.
\newblock {\em CoRR}, abs/0809.3232, 2008.

\bibitem{Williams}
D.~Williams.
\newblock {\em Probability with martingales}.
\newblock Cambridge University Press, 1991.

\end{thebibliography}

\end{document}